\documentclass{article}
\usepackage{ijcai26}
\usepackage{makecell}
\usepackage{empheq}
\usepackage{multirow}
\usepackage{times}
\usepackage{soul}
\usepackage{url}
\usepackage[hidelinks]{hyperref}
\usepackage[utf8]{inputenc}
\usepackage[small]{caption}
\usepackage{graphicx}
\usepackage{amsmath}
\usepackage{amsthm}
\usepackage{booktabs}
\usepackage{algorithm}
\usepackage{algorithmic}
\usepackage[switch]{lineno}
\usepackage{amssymb}
\usepackage{multicol}
\usepackage{tasks}
\usepackage{booktabs}
\usepackage{enumitem}
\usepackage{cases}
\usepackage[nice]{nicefrac}

\newtheorem{thm}{Theorem}
\newtheorem*{thm*}{Theorem}
\newtheorem{lem}{Lemma}
\newtheorem*{lem*}{Lemma}

\newtheorem{Property}{Property}
\newtheorem*{Property*}{Property}

\title{MRCert: Towards Post-deployment Patch Robustness Certification for Adversarially Patched Samples via Type-specific Masking}

\author{
    Qilin Zhou$^{1}$, Zhengyuan Wei$^{2}$, Haipeng Wang$^{1}$, Zhuo Wang$^{1}$, Shuo Liu$^{1}$, W.K. Chan$^{1,*}$
    \affiliations
     $^{1}$City University of Hong Kong $^{2}$The Education University of Hong Kong\\
     $^{*}$Corresponding author
}

\begin{document}

\maketitle

\begin{abstract}
In post-deployment time, inputs to deep learning models may or may not be adversarially patched.
Patch robustness certification on such inputs within a patch bound can verify their label benignity and should retain high prediction accuracy.
However, existing smoothing-based and masking-based recovery  defenders cannot achieve both simultaneously: they degrade the prediction accuracy much
and cannot verify the benignity of the returned label of an adversarially patched input, respectively.
We propose MRCert, the first masking-based certified recovery defender that shows the feasibility of achieving both.
Unlike all existing works to apply a common condition across both types of input (benign and adversarially patched samples) for certification,
MRCert infers type-specific necessary properties of deep learning models for both types in post-deployment time and formally relates them to verify the label benignity through a novel type-oriented design of label recovery and certification function pair.
Without incurring the degradation in clean accuracy caused by smoothing, experimental results confirm that MRCert achieves 35.1\% adversarial certified accuracy on ImageNet at patch size 16 pixels, whereas the SOTA PatchCURE fails completely.

\end{abstract}

\section{Introduction}
Deep learning (DL) systems have achieved remarkable success in many application domains, including those safety-critical \cite{2021SurveyK,litjens2017survey};
yet their vulnerability to adversarial attacks \cite{szegedy2013intriguing}, including those physically realizable attacks like a sticker \cite{wei2023simultaneously} or a tag \cite{liu2020bias} in the real world, poses significant threats to their deployment, e.g., commercial autonomous vehicles \cite{wang2025revisiting}.
Patch adversarial attack \cite{brown2017adversarial,wei2024physical}
is a model of physically realizable attacks,
where attackers can arbitrarily manipulate pixels within a certain region of an image (referred to as a patch region) to compromise DL systems.

Deterministic certified defenders are emerging.
Among them, certified recovery against patch attacks \cite{levine2020randomized,metzen2021efficient,li2022vip,xiang2022patchcleanser,salman2022certified,zhou2023majority,zhou2024crosscert,xiang2024patchcure}, as a major line of patch robustness certification,
aims to ensure the recovery of benign labels from all patched neighbors of certified benign samples.
These existing  defenders are predominantly designed to
offer pre-deployment provable patch robustness for benign samples.
In practice, it is desirable to provide a certification on input samples
for certified recovery \emph{at post-deployment time},
especially when an input may already be corrupted,
where an issued certificate indicates that the label returned by the label recovery function for the input sample is robust and benign, so that downstream tasks can take informed actions.

However, current certified recovery  defenders, including both masking-based \cite{xiang2024patchcure,xiang2022patchcleanser} and smoothing-based methods
\cite{salman2022certified,li2022vip}, fall short in this regard. Masking-based ones rely on assumptions about the input's benignity and fail to certify any adversarially patched samples of certified benign samples (illustrated in  Fig.~\ref{fig:first_view} and formalized in \S~\ref{sec:moti}), while smoothing-based ones suffer from degraded correctness due to their smoothing mechanisms \cite{xiangshort}.

To address these limitations, we propose MRCert, the first certified recovery defender that provides a type-specific robustness certification crosscutting with its type-specific label recovery function, able to certify different types of input samples in post-deployment time.
MRCert can distinguish among benign, patched, and double-patched inputs with different recovery strategies on them in a type-specific manner.
It is able to not only recover the benign label for adversarially patched samples of certified benign samples (like its peers)
but also assign a provable certificate indicating whether the label returned by the recovery function for the input sample is robust and benign.
The type-specific design of MRCert enables it to certify both benign and adversarial samples, overcoming the fundamental limitations of peers.
In addition to maintaining a high certified accuracy and clean accuracy on benign samples, the experiment shows that MRCert achieves a non-trivial round-trip certified accuracy, which is further validated on patched samples from actual attacks.

The main contribution of this paper is twofold. (1) The paper proposes MRCert and proves its soundness.
(2) It validates MRCert's high effectiveness in certifying benign and adversarial samples while maintaining high accuracy.

\begin{figure}[t]
\centering
\includegraphics[width =\linewidth]{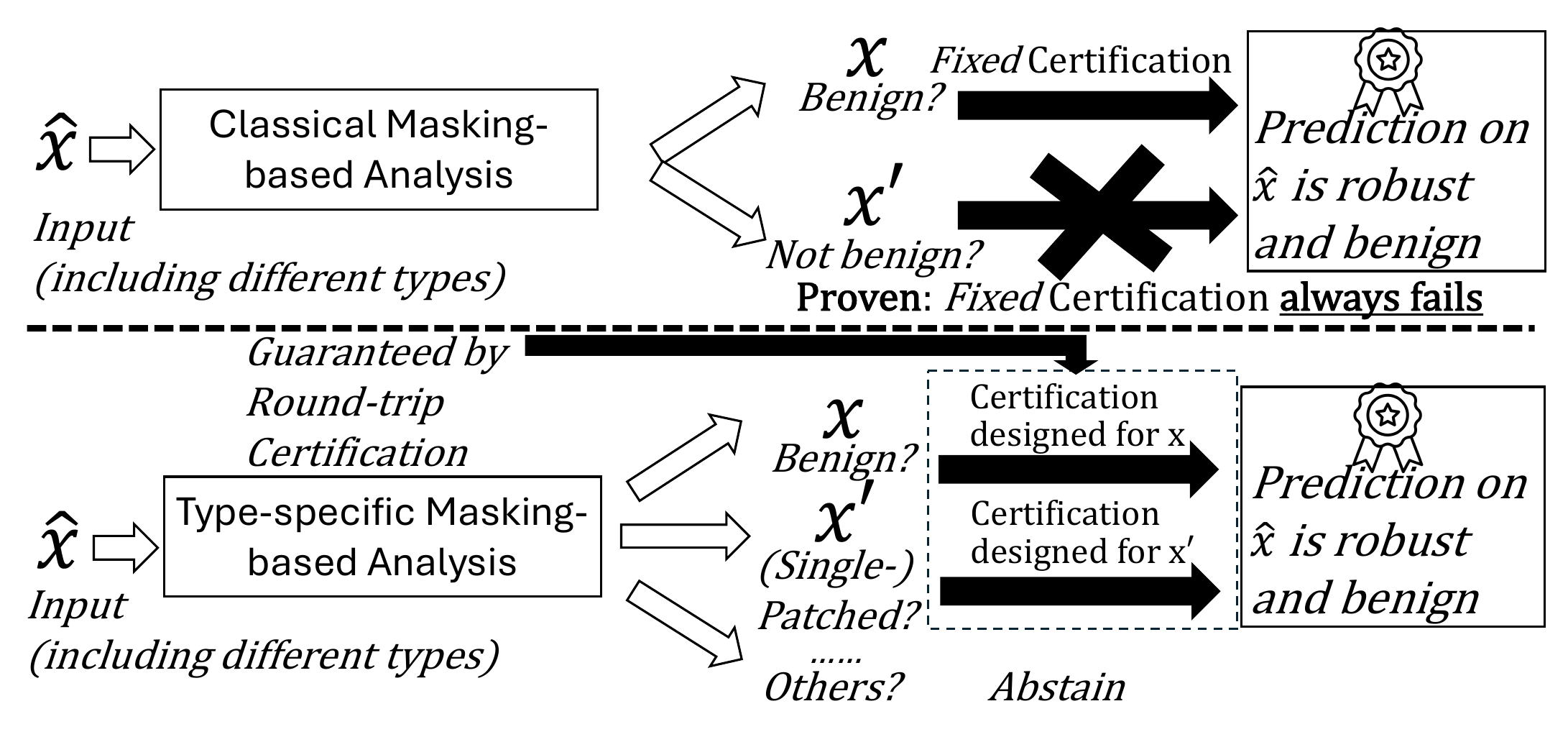}
\captionsetup{font={small}}
\caption{
Post-deployment certification in classical masking-based analysis \emph{vs} our type-specific masking-based analysis.
}
\label{fig:first_view}
\end{figure}

\section{Preliminaries}
We first summarize the main notation we use.
\setlength{\multicolsep}{2pt}
\begin{multicols}{2}
\begin{itemize}[nosep, leftmargin=*]
    \item $x$: a benign sample
    \item $x'$: $x$'s patched sample
    \item $\hat{x}$: an input sample
    \item $f$: \text{a base classification model}
    \item $c_r$: a certification function
    \item $g$: \text{a label recovery function}
    \nolinenumbers
    \item $\textsc{p}$: a patch region
    \item $\mathbb{P}$: a patch region set
    \item $\textsc{m}$: a mask
    \item $\hat{x}\odot\textsc{m}$: a mutant of $\hat{x}$
    \item $\mathbb{M}_\mathbb{P}$: a covering mask set
    \item $y_0$: the true label
    \item $\mathbb{A}_\mathbb{P}(.)$: all patched neighbors
\end{itemize}
\end{multicols}

\subsection{Patch attacks and Certified Recovery}
\subsubsection{Model and Patch}
An image classifer is a function $f: \mathcal{X} \rightarrow \mathcal{Y}$, which
takes $\hat{x}$ $\in \mathcal{X} \subset \mathbb{R}^{w \times h}$ (with its true class label $y_0$ $\in \mathcal{Y} = \{0, \cdots, |\mathcal{Y}|-1\}$) as input and produces a label $f(\hat{x}) \in \mathcal{Y}$.
A patch  in a sample is modeled to reside within a contiguous square \textbf{patch region}, represented by a binary matrix $\textsc{p} \in \mathbb{P} \subset \{0,1\}^{w \times h}$,
where 1s indicate the region and 0s elsewhere.
The set
$\mathbb{P}$ includes \textbf{all patch regions} satisfying predefined conditions (e.g., shape, size).
We use the attack constraint set $\mathbb{A}_\mathbb{P}({x})$ to define \textbf{all patched neighbors} (in terms of one patch by default) of a benign sample $x$:
$\mathbb{A}_\mathbb{P}({x})$ = $\{{x}' \mid {x}'=(\textsc{J}-\textsc{p}){\odot} {x}+\textsc{p}{\odot} {x}' \land \textsc{p}\in \mathbb{P} \}$.
Here, ${x}'$ is a patched neighbor of $x$ with possible pixel modification inside \textsc{p} ($\textsc{p}{\odot} {x}'$) while keeping those outside unmodified ($(\textsc{J}-\textsc{p}){\odot} {x}$).
 $\odot$ denotes element-wise multiplication, and \textsc{J} is an all-ones matrix.
Like prior certification works,
our theoretical model focuses on robustness but not correctness and assumes $f(x)=y_0$ when certifying $x$.
The sample ${x}'$ is \textbf{adversarially} patched if
 $f(x')\neq f(x)$.
 Note that a benign sample is a trivial (non-modified) patched neighbor of itself.

 In this work, we build our theory on top of the assumption like \cite{yatsura2023certified} that the post-deployment input domain consists of benign samples and their patched neighbors.

\subsubsection{Certified Recovery}

A certified recovery  defender $R = \langle g, c_r\rangle$ is built on a base classifier $f$. The label recovery function $g$ provably returns the prediction of a benign sample $x$ for any patched neighbor $x'$ of $x$, i.e., $g(x'){=}g(x)$ if $x$ is \textbf{certified} by $c_r$, i.e.,  $[\forall x'{\in}\mathbb{A}_\mathbb{P}(x), c_r(x)$ ${=}\textit{True}\,{\implies}\,g(x'){=}g(x)]$.
Given a benign sample $x$, an attacker aims to find a patched neighbor $x'$, such that $g(x')\neq g(x)$.

\subsection{Post-deployment Robustness Certificate}
\cite{yatsura2023certified} proposes a post-deployment robustness certificate for certified patch recovery:
Since
$x \in \mathbb{A}_\mathbb{P}(\hat{x})$, one must have $g(\hat{x}){=}g({x})$ if $c_r(\hat{x}){=}\textit{True}$ (i.e., $[\forall \hat{x}' \in \mathbb{A}_\mathbb{P}(\hat{x}), g(\hat{x}'){=}g(\hat{x})]$) holds, indicating that the prediction $g(\hat{x})$ must be robust and further benign.
\textbf{Problem:}
Existing  defenders
are either inapplicable to provide such a certificate whenever inputs are adversarially patched samples of certified samples \cite{xiang2024patchcure} or able to provide but less accurate for benign samples \cite{salman2022certified} (see \S \ref{sec:moti}), limiting their applicability to provide the certificate in post-deployment time.

\subsection{Masking-Based Certified Recovery}
\label{sec:patchclenaser}

A mask $\textsc{m} \in \mathbb{M}_\mathbb{P} \subset \{0,1\}^{w \times h}$ ablates a specific region from a sample (0 for mask, 1 otherwise). The covering mask set $\mathbb{M}_\mathbb{P}$ \cite{xiang2024patchcure} ensures any patch $\textsc{p} \in \mathbb{P}$ is fully enclosed by some mask $\textsc{m} \in \mathbb{M}_\mathbb{P}$, i.e., $\forall \textsc{p} \in \mathbb{P}, \exists \textsc{m} \in \mathbb{M}_\mathbb{P}, \textsc{p} {\odot} \textsc{m} = \mathbf{O}$ (an all-zero matrix).
Ablating $\hat{x}$ with $k \in \{1,\dots,4\}$ masks from $\mathbb{M}_\mathbb{P}$ produces a $k$-th order \textbf{mutant}, denoted as $\hat{x} {\odot} m_1 {\odot} \dots {\odot} m_k$ (Fig.~\ref{fig:mask-overview_concept} for $k$=1).
Re-applying the same mask to a mutant trivially yields a higher-order one.

A masking-based certified detection defender \cite{patchcensor} raises a warning if any first-order mutant of an input $\hat{x}$ deviates from its original prediction (called \textbf{divergent} mutant), i.e., $\exists \textsc{m} \in \mathbb{M}_\mathbb{P}$ such that $f(\hat{x} {\odot} \textsc{m}) \neq f(\hat{x})$.

Extending this, a certified recovery defender $R = \langle g, c_r\rangle$ seeks to recover the benign prediction of $x'$
\cite{xiang2024patchcure}: $g$ generates all second-order mutants of $x$ for such divergent mutants to test them and returns the prediction of a divergent mutant if all of them and the divergent mutant yield the same prediction.
If no such divergent mutant is identified at either stage, $g$ returns $f(\hat{x})$.
This mechanism is formalized by the following theorem:
$[\forall \textsc{m}_1, \textsc{m}_2\in\mathbb{M}_\mathbb{P}, f(x\odot\textsc{m}_1\odot\textsc{m}_2)=y_0] \implies [\forall x'\in\mathbb{A}_\mathbb{P}(x), \exists \textsc{m}_1 \in\mathbb{M}_\mathbb{P}, \forall \textsc{m}_2 \in\mathbb{M}_\mathbb{P}, f(x'{\odot} \textsc{m}_1{\odot} \textsc{m}_2)=y_0]$, where the antecedent serves as the certification function, and the consequent yields prediction as described above.

\begin{figure*}[t]
\centering
\includegraphics[width =0.97\linewidth]{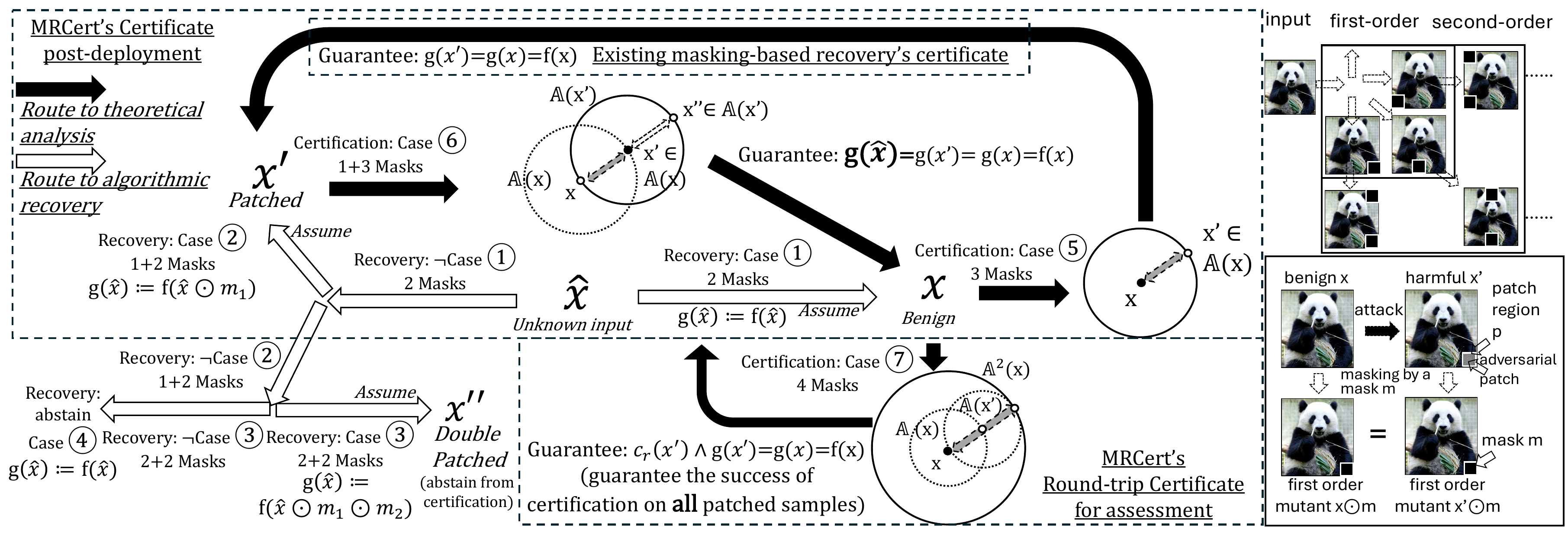}
\captionsetup{font={small}}
\caption{The basic concepts in masking-based defenders (right) and the illustration of MRCert (left).
See Fig.~\ref{fig:MRCert_formulation} for their formulation.
}

\label{fig:mask-overview_concept}
\end{figure*}

\section{Motivation and Challenges}\label{sec:moti}

Post-deployment robustness certification for benign samples and their patched samples is highly desirable.
Such a formal guarantee empowers users to substantially augment automation in safety-critical scenarios for those certified by establishing a provable basis for trust.
One needs a defender to certify input samples with uncertainty about their benignity while maintaining high clean accuracy.
However, both smoothing-based and masking-based certified recovery have their own limitations and challenges in achieving the goal.

\textit{Masking-based Recovery.}
To handle inputs with unknown groundtruth $y_0$ at post-deployment time, we follow \cite{patchcensor} to replace $y_0$ with $f(x)$ in the certification in \cite{xiang2022patchcleanser}:
$c_r(\hat{x}) := [\forall \textsc{m}_1, \textsc{m}_2 \in \mathbb{M}_\mathbb{P}, f(\hat{x} {\odot} \textsc{m}_1 {\odot} \textsc{m}_2) = f(\hat{x})]$.
However, we find that \emph{no} adversarially patched sample of a certified sample $x'$ by this $c_r$ can be further certified.

\begin{Property}[Infeasibility of Re-certifying Adversarially Patched Samples for Existing Masking-based Recovery Works]
$
\forall x' \in \mathbb{A}_{\mathbb{P}}(x), c_r(x) \land f(x') \neq f(x) \implies \neg c_r(x').
    $
\end{Property}

\textit{Idea of proof.}
If $c_r(x)$ is satisfied, all its first-order mutants predict $f(x)$. Since $x'$ shares a first-order mutant with $x$ but $f(x') \neq f(x)$, it cannot satisfy the prediction consensus required for $c_r(x')$.
See the formal proof in the appendix.

\textit{Smoothing-based recovery.}
Smoothing is known to seriously degrade clean accuracy on non-trivial datasets like ImageNet \cite{mohapatra2021hidden}, including those for patch attacks \cite{xiangshort}.
Previous works have proposed an \textit{empirical} masking-based detection method atop smoothing, but the improvement in clean accuracy is only marginal (e.g., 0.8\% on CIFAR10 in \cite{xiang2021patchguard}).

\section{Methodology}

\subsection{Overview of MRCert}
Fig.\ref{fig:mask-overview_concept} depicts an overview of
our certified recovery defender, MRCert $R = \langle g, c_r, c_r^2\rangle$, formulated in Fig.~\ref{fig:MRCert_formulation}.
The label recovery function $g(\hat{x})$ returns a label structured as a nested-if sequence: \texttt{if \textcircled{1} elseif \textcircled{2} elseif \textcircled{3} else \textcircled{4}}.
The certification function $c_r(\hat{x})$ (cases \textcircled{5}--\textcircled{6}) verifies if $g$'s output remains invariant for $\hat{x}$ and all neighbors in $\mathbb{A}_\mathbb{P}(\hat{x})$.

Our setting involves patched neighbors not present in the clean dataset. Since computing the full set of patched neighbors is intractable, we provide provable assessments that map to the clean dataset, formulated as $c^2_r$ in case \textcircled{7}.

At post-deployment, MRCert is configured with a base model $f$, accepts an input $\hat{x}$, and applies $g$ and $c_r$ on the input to return a label and a certification verdict on robustness.

MRCert treats benign samples (including those benignly patched) and their adversarially patched neighbors as distinct input types.
It infers the type-specific necessary properties of $f$ for these two types that can be detected, recovered, and certified, and links up these properties through a type-oriented design:
For benign samples, it detects through case \textcircled{1}, certifies through  case \textcircled{5} if case \textcircled{1}  holds;
for adversarially patched neighbors of benign samples,
it detects through the violations of \textcircled{1} and further detects and recovers through  case \textcircled{2}, and certifies through  case \textcircled{6}.
In the theoretical model of an input $\hat{x}$ being either benign or patched, even if the input is adversarially patched, $c_r(\hat{x})$ can hold true to indicate that $g(\hat{x})$ must return a benign label.
Existing masking-based defenders cannot offer this property.
Also,
if case \textcircled{7} holds, then
all the conditions for the above certification, detection, and recovery will be guaranteed to hold, as depicted by the solid arrows in Fig.~\ref{fig:mask-overview_concept} (and
the hollow arrows depict cases \textcircled{1}--\textcircled{4} of $g$).
Swapping the order of cases \textcircled{1}--\textcircled{4} in $g$ will make case \textcircled{7} for round-trip certification fail (see the appendix for discussion).

\begin{figure*}\centering
\begin{minipage}{\textwidth}

\begin{numcases}{g(\hat{x}):=}
f(\hat{x}),
& if $\forall {\textsc{m}_1}, {\textsc{m}_2}\in\mathbb{M}_\mathbb{P}, f(\hat{x}\odot\textsc{m}_1\odot\textsc{m}_2)= f(\hat{x})$
\tag{case \textcircled{1}}
\label{g-case-detecton-of-patch-existence}
\\
f(\hat{x}\odot{\textsc{m}_1}),
& else if  $\exists {\textsc{m}_1}\in\mathbb{M}_\mathbb{P},\forall {\textsc{m}_2}, {\textsc{m}_3}\in\mathbb{M}_\mathbb{P}, f(\hat{x}\odot\textsc{m}_1\odot\textsc{m}_2\odot\textsc{m}_3)= f(\hat{x}\odot{\textsc{m}_1})$
\tag{case \textcircled{2}}
\label{g-case-identify-patched-only}
\\
f(\hat{x}\odot\textsc{m}_1\odot\textsc{m}_2),
& \parbox[t]{0.65\textwidth}{
  else if $\exists {\textsc{m}_1},{\textsc{m}_2} (\neq {\textsc{m}_1})\in\mathbb{M}_\mathbb{P},\forall {\textsc{m}_3}, {\textsc{m}_4}\in\mathbb{M}$, \\
  $f(\hat{x}\odot{\textsc{m}_1}\odot{\textsc{m}_2}$
  $\odot{\textsc{m}_3}\odot{\textsc{m}_4})=$
  $f(\hat{x}\odot\textsc{m}_1\odot\textsc{m}_2)$
}
\tag{case \textcircled{3}}
\label{g-case-empricially-recover-if-double-patchedes}
\\
f(\hat{x}),
& \text{otherwise}
\tag{case \textcircled{4}}
\label{g-case-fallback-strategy-no-recovery}
\end{numcases}
\vspace{-1.5ex}
\begin{empheq}[left={\hspace*{-1cm}\text{\framebox{certificate}} \ \  c_r(\hat{x}):= \empheqlbrace}]{align*}
  & [\forall \text{\textsc{m}}_1, \text{\textsc{m}}_2, \text{\textsc{m}}_3 \in \mathbb{M}_\mathbb{P}, f(\hat{x} {\odot} \text{\textsc{m}}_1 {\odot} \text{\textsc{m}}_2 {\odot} \text{\textsc{m}}_3) = f(\hat{x})]
  & \text{if case \textcircled{1} is taken in } g \tag{case \textcircled{5}} \\
  & [\exists \text{\textsc{m}}_1 \in \mathbb{M}_\mathbb{P}, \forall \text{\textsc{m}}_2,\text{\textsc{m}}_3,\text{\textsc{m}}_4 \in \mathbb{M}_\mathbb{P}, \\
  & f(\hat{x} {\odot} \text{\textsc{m}}_1 {\odot} \text{\textsc{m}}_2 {\odot} \text{\textsc{m}}_3 {\odot}
  \text{\textsc{m}}_4) = f(\hat{x} {\odot} \text{\textsc{m}}_1)]
  & \text{if case \textcircled{2} is taken in } g \tag{case \textcircled{6}}
\end{empheq}
\begin{numcases}{\framebox{round-trip certificate} \ \ \ c_r^2(\hat{x}):=}
[\forall {\textsc{m}_1},{\textsc{m}_2},{\textsc{m}_3},{\textsc{m}_4}\in\mathbb{M}_\mathbb{P}, f(\hat{x}\odot{\textsc{m}_1\odot\textsc{m}_2\odot\textsc{m}_3\odot\textsc{m}_4})=f(\hat{x})]
 \tag{case \textcircled{7}}
  \label{cr-case-for-neighbor-of-patched-sample}
  \label{Eq: double_certification}
\end{numcases}

\end{minipage}
\caption{
The formulation of each component of MRCert.
}
\label{fig:MRCert_formulation}
\end{figure*}

\subsection{Needs in Certifying Adversarial Samples}
\label{sec:needs}
In the theoretical model of certified recovery \emph{subject to post-deployment certification},
the input is either benign $x$ or patched $x'$. To certify the robustness of an adversarially patched neighbor $x'$ of $x$, the certification analyzer must analyze all neighbors in $\mathbb{A}_\mathbb{P}(x')$, which may contain up to two patches relative to $x$, including $x$ itself (clean), $x'$ (patched), and double-patched instances.
The fundamental limitation of existing masking-based peers \cite{xiang2022patchcleanser,xiang2024patchcure} stems from using the \emph{fixed-masking paradigm}:
By design, their $g$ and $c_r$ always apply a fixed number of masks to each input regardless of their patch count, forcing them to test the consistency of the predictions from the input's mutants under the same condition formulation.
To defend two patches, such a defender must scale its number of masks: (a) $g$ must generate second-order mutants for detection; (b) recovery requires fourth-order mutants to resolve a second-order divergent mutant.
This change
can only extend the guarantee for $x$ against the expanded set $\mathbb{A}_\mathbb{P}^2(x) = \bigcup_{x' \in \mathbb{A}_\mathbb{P}(x)} \mathbb{A}_\mathbb{P}(x')$, and remains unable to certify $x' \in \mathbb{A}_\mathbb{P}(x)$.

Their core bottleneck is over-approximation: To recover $x'$, their common theoretical model treats every input and its patched neighbors as double-patched instances; therefore, to certify $x'$, it must be further modified into yet another Patchcleanser variant to defend against four patches, creating an unresolvable ``cat-and-mouse" recursion. So,
any fixed-mask configuration is fundamentally unable to certify $x'$.

\subsection{Type-Specific Label Recovery Function
}\label{sec:self-adaptive}

The primary challenge in designing $g$ lies in the agnostic nature of the input types:
the mechanism must certify both types of input, $x$ and $x' \in \mathbb{A}_\mathbb{P}(x)$, without prior knowledge of the patch count in an input. Under the above fixed-masking paradigm, certifying a benign sample $x$ requires the label consistency across all its fourth-order mutants. Let us first adopt this property for ${x}$ to discuss the design of our $g$.

Our $g$ follows a conditional pipeline to resolve issues arising from different input types
as discussed in \S\ref{sec:needs}. In Case \textcircled{1}, $g$ screens $\hat{x}$ using second-order mutants to detect adversarial patches, ensuring that double-patched samples cannot evade the detection.
Then, the core question is:
\textbf{How should $g$ be designed to discriminate the number of adversarial patches in $\hat{x}$ after the above detection procedure?}

Our \textbf{first insight} is that testing all first-order mutants of $\hat{x}$ by the prediction consistency of the third-order mutants of $\hat{x}$ not only exposes which of these first-order mutants that $f$ returns the benign label, but also indicates whether $\hat{x}$ contains a single (adversarial) patch.
We refer to it as \textit{indirect testing} of adversarially patched samples.
The elegance of this test lies in its dual roles: while identifying and recovering the labels for inputs with one patch, at the same time, it detects inputs that have more than one patch by the consistency of third-order mutants,
thereby \emph{precluding any evasion attempts} of double-patched neighbors
if $g$ naively uses two masks (just like \cite{xiang2022patchcleanser}) for inputs with one patch to recover its label, since the first mask may miss the patch.
So, $g$ identifies $\hat{x}$ to be $x' \in \mathbb{A}_\mathbb{P}(x)$ based on this indirect testing, enabling certified recovery in Case \textcircled{2}.
It solves the problem of testing $\hat{x}$ that directly uses a test suite consisting of either the set of all first-order mutants or the set of all second-order mutants,
which either fails to ensure the robustness guarantee on $\hat{x}$,
or is too coarse in detection granularity.
To our knowledge, MRCert is the \textit{first work} that is able to detect the number of patches on an adversarial input sample with a guarantee
if its benign counterpart is round-trip certified (see \S\ref{sec:certification}).

Thus, if $\hat{x}$ is detected as an adversarially patched sample in Case \textcircled{1} but not detected as a sample with one patch in Case \textcircled{2}, $g$ deems it as a sample with two (or more) adversarial patches.
MRCert abstains from verifying its robustness, and $g$ adopts the existing recovery on double-patched samples in Case \textcircled{3} as a best effort attempt.
If all these test efforts cannot find a label to return, Case \textcircled{4}
returns
the original prediction.

We will present the formal guarantee in \S \ref{sec:certification}.
The appendix provides
further justifications for our design choices, including the necessity of the current sequence and mask quantity.

\subsection{Post-deployment Type-specific Certification and Round-trip Certification for Assessment}\label{sec:certification}
Building upon the proposed type-oriented recovery mechanism, we formalize the robustness certification conditions for MRCert against both benign samples and adversarially patched neighbors.

First, to ensure that an adversarial sample $x'$ (with one patch) is caught by Case \textcircled{1} of $g$, we must guarantee the existence of at least one second-order mutant of $x'$ whose predicted label differs from that of $x'$.
By definition, we know $f(x')\neq f(x)$.
If $x$ and all its first-order mutants share the same prediction, then the predictions for first-order mutant of $x'$ that masks the adversarial patch will revert to $f(x)$, further differing from $f(x')$, which is detected by Case \textcircled{1}.
Formally, we require $[\forall \textsc{m}_1 \in \mathbb{M}_\mathbb{P}, f(x{\odot}\textsc{m}_1) = f(x)] {\implies} [\forall x' \in \mathbb{A}_\mathbb{P}(x), f(x') \neq f(x) {\implies} \exists \textsc{m}_1 \in \mathbb{M}_\mathbb{P}, f(x'{\odot}\textsc{m}_1) \neq f(x')]$ holds.
Next, to recover the benign labels for all $x' \in \mathbb{A}_\mathbb{P}(x)$), we must ensure that they are correctly identified under Case \textcircled{2} of $g$. Since the specific mask $\textsc{m}_1$ required to neutralize the adversarial patch is unknown a priori, we formulate a consistency requirement:
for every first-order mutant, the entire set of its corresponding third-order mutants must yield a prediction identical to that of the first-order mutant itself.
Formally, we require $[\forall \textsc{m}_1, \textsc{m}_2, \textsc{m}_3 \in \mathbb{M}_\mathbb{P}, f(x{\odot}\textsc{m}_1{\odot} \textsc{m}_2{\odot}\textsc{m}_3) {=} f(x{\odot}\textsc{m}_1)] {\implies} [\exists \textsc{m}_1 {\in} \mathbb{M}_\mathbb{P}, \forall \textsc{m}_2, \textsc{m}_3$ ${\in} \mathbb{M}_\mathbb{P}, f(x'{\odot}\textsc{m}_1{\odot}\textsc{m}_2 {\odot}\textsc{m}_3) {=} f(x' {\odot} \textsc{m}_1)]$ holds.
A certified sample is required to satisfy the antecedents of both, ensuring being detected in Case \textcircled{1} and recovered in Case \textcircled{2} for all of its adversarially patched samples.
By substituting the input as $\hat{x}$, we derive: $[\forall \textsc{m}_1, \textsc{m}_2, \textsc{m}_3 \in \mathbb{M}_\mathbb{P},$ $ f(\hat{x} {\odot} \textsc{m}_1 {\odot} \textsc{m}_2 {\odot} \textsc{m}_3) {=} f(\hat{x})]$. This is Case \textcircled{5}, which we formally define it in Lemma~\ref{lem:recovery_benign}.
The robustness and benignity of an input sample in Case \textcircled{1} of $g$ can be verified through it.

\begin{lem}[Certification of benign samples]
\label{lem:recovery_benign}
Given an arbitrary sample $\hat{x}$,
if $[\forall {\textsc{m}_1},{\textsc{m}_2},{\textsc{m}_3}\in\mathbb{M}_\mathbb{P}, f(\hat{x}\odot{\textsc{m}_1}\odot{\textsc{m}_2}\odot{\textsc{m}_3})= f(\hat{x})]$ holds,
$\forall \hat{x}'\in\mathbb{A}_\mathbb{P}(\hat{x}),
g(\hat{x}')=g(\hat{x})
$.
\end{lem}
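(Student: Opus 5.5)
The plan is to show that, under the hypothesis, $g$ returns $y:=f(\hat{x})$ on $\hat{x}$ itself and on every patched neighbor $\hat{x}'$. On $\hat{x}'$ this happens through either case \textcircled{1} or case \textcircled{2}, and never through cases \textcircled{3} or \textcircled{4}. Two elementary facts about masks do all the work.

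\emph{Idempotence.} Masks are binary, so $\textsc{m}\odot\textsc{m}=\textsc{m}$. Hence every mutant of order $k\le 3$ can be rewritten as a third-order mutant by repeating a mask. For instance, $\hat{x}\odot\textsc{m}_1\odot\textsc{m}_2=\hat{x}\odot\textsc{m}_1\odot\textsc{m}_2\odot\textsc{m}_2$. The hypothesis therefore gives $f(\hat{x}\odot\textsc{m}_1\odot\textsc{m}_2)=f(\hat{x}\odot\textsc{m}_1)=y$ for all $\textsc{m}_1,\textsc{m}_2\in\mathbb{M}_\mathbb{P}$.

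\emph{Shared mutants.} Let $\hat{x}'\in\mathbb{A}_\mathbb{P}(\hat{x})$ with patch region $\textsc{p}$. By the covering property, pick $\textsc{m}^*\in\mathbb{M}_\mathbb{P}$ with $\textsc{p}\odot\textsc{m}^*=\mathbf{O}$. Since $\hat{x}'$ and $\hat{x}$ agree outside $\textsc{p}$, we get $\hat{x}'\odot\textsc{m}^*=\hat{x}\odot\textsc{m}^*$. It follows that $\hat{x}'\odot\textsc{m}^*\odot\textsc{m}_2\odot\cdots=\hat{x}\odot\textsc{m}^*\odot\textsc{m}_2\odot\cdots$ for any further masks.

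\emph{Step 1: the value $g(\hat{x})$.} By idempotence, case \textcircled{1} holds at $\hat{x}$, so $g(\hat{x})=f(\hat{x})=y$.

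\emph{Step 2: cases \textcircled{3} and \textcircled{4} are never reached at $\hat{x}'$.} Suppose case \textcircled{1} fails at $\hat{x}'$. Take $\textsc{m}_1=\textsc{m}^*$ in case \textcircled{2}. For all $\textsc{m}_2,\textsc{m}_3$, the shared-mutant fact and the hypothesis give
\[
f(\hat{x}'\odot\textsc{m}^*\odot\textsc{m}_2\odot\textsc{m}_3)=f(\hat{x}\odot\textsc{m}^*\odot\textsc{m}_2\odot\textsc{m}_3)=y=f(\hat{x}\odot\textsc{m}^*)=f(\hat{x}'\odot\textsc{m}^*).
\]
So the existential condition of case \textcircled{2} is met, and $g$ exits there.

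\emph{Step 3: whichever branch is taken at $\hat{x}'$, it returns $y$.}

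If case \textcircled{1} holds at $\hat{x}'$, then $g(\hat{x}')=f(\hat{x}')$. Choosing $\textsc{m}_1=\textsc{m}_2=\textsc{m}^*$ in case \textcircled{1} yields $f(\hat{x}')=f(\hat{x}'\odot\textsc{m}^*)=f(\hat{x}\odot\textsc{m}^*)=y$. This also shows that an adversarially patched $\hat{x}'$, with $f(\hat{x}')\neq y$, necessarily fails case \textcircled{1}.

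If case \textcircled{2} is taken, $g$ returns $f(\hat{x}'\odot\textsc{m}_1)$ for some $\textsc{m}_1$ satisfying the case \textcircled{2} condition, and this $\textsc{m}_1$ need not be $\textsc{m}^*$. This is the step I expect to be the main obstacle: the selection in case \textcircled{2} is existential, so the argument must hold for every admissible $\textsc{m}_1$, not only the one Step 2 exhibited. To handle it, instantiate $\textsc{m}_2=\textsc{m}_3=\textsc{m}^*$ in that condition:
\[
f(\hat{x}'\odot\textsc{m}_1)=f(\hat{x}'\odot\textsc{m}_1\odot\textsc{m}^*\odot\textsc{m}^*)=f(\hat{x}\odot\textsc{m}_1\odot\textsc{m}^*)=y.
\]
Here the middle equality uses commutativity of $\odot$ together with the shared-mutant fact, and the last equality is idempotence plus the hypothesis.

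In every branch $g(\hat{x}')=y=g(\hat{x})$, which proves the lemma. The only ingredients are idempotence of binary masks, commutativity of $\odot$, and the covering property of $\mathbb{M}_\mathbb{P}$.
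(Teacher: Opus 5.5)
Your proof is correct and follows essentially the same route as the paper: case \textcircled{1} holds at $\hat{x}$ by repeating masks, a covering mask $\textsc{m}^*$ gives $\hat{x}'\odot\textsc{m}^*=\hat{x}\odot\textsc{m}^*$, and you split on whether $\hat{x}'$ exits at case \textcircled{1} or case \textcircled{2}. Your extra instantiation $\textsc{m}_2=\textsc{m}_3=\textsc{m}^*$, which handles an arbitrary case-\textcircled{2} witness $\textsc{m}_1$, is a worthwhile tightening, because the paper's proof tacitly assumes the returned $\textsc{m}_1$ is the covering mask itself.
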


Our \textbf{second insight} is that the certification function should also be \emph{type-specific and crosscutting the label recovery function on a type-specific basis, rather than type-agnostic and acting purely as a downstream task of the label recovery function.}
In previous patch robustness certification paradigms, a certification function is agnostic in input type (see
\S\ref{sec:patchclenaser}) and relates to the output of a label recovery function without knowing the execution path condition that produces the output.
However, it is not optimal for masking-based defenders to handle the situation where benign samples and their adversarially patched samples are subject to robustness certification (see \S \ref{sec:moti}), which requires different treatments (see \S \ref{sec:needs}).
Therefore, we design our certification and label recovery function pair to correspond to each other in a type-specific manner.
To certify a adversarially patched sample as input,
we need to ensure that the returned label remains the same across the input and all its patched neighbors, where double-patched samples should be included.
Samples with two adversarial patches
are expected to be detected by violating the condition of Case \textcircled{2} and their labels are recovered in Case \textcircled{3} in the design of $g$ presented in \S \ref{sec:self-adaptive}.
To ensure achieving such a goal, $x'$ should have a first-order mutant (denoted as $x'{\odot}\textsc{m}_1$), whose all fourth-order mutants are predicted with the same prediction
$f(x'{\odot}\textsc{m}_1)$ (which should be the same as $g(x')$, the return prediction of $x'$, by the conditions in Case \textcircled{1} and Case \textcircled{2}).
With such a condition, for all double-patched samples (denoted as $x''$) of this adversarially patched sample (i.e., putting another patch on this adversarially patched sample),
there should have a second-order mutant of $x''$ (denoted as $x''{\odot}\textsc{m}_1{\odot}\textsc{m}_2$) generated from this first-order mutant $x''{\odot}\textsc{m}_1$(=$x'{\odot}\textsc{m}_1$) with the second mask $\textsc{m}_2$ covering the new patch in the input, and $f$ predicts the same label for this second-order mutant $x''{\odot}\textsc{m}_1\odot\textsc{m}_2$ and all fourth-order (and third-order) mutants generated from this second-order mutant.
So, all adversarially double-patched samples $x''$ (i.e., whose attacker aims to make $g(x'')\neq g(x')$ where $x''{\in}\mathbb{A}_\mathbb{P}(x')$)
should meet the following three conditions:
(1) $x''$
must violate the condition in Case \textcircled{1} because at least one second-order mutant $x''{\odot}\textsc{m}_1{\odot}\textsc{m}_2$ predicts to $g(x')$;
(2) $x''$
must violate the condition in Case \textcircled{2} because all third-order mutants of $x''{\odot}\textsc{m}_1{\odot}\textsc{m}_2$ predicts to $g(x')$; and,
(3) $x''$ must satisfy the condition in Case \textcircled{3} by $x''{\odot}\textsc{m}_1{\odot}\textsc{m}_2$ and all fourth-order (and third-order) mutants generated from this second-order mutant predicted to $g(x')$, and further output the label of $x''{\odot}\textsc{m}_1{\odot}\textsc{m}_2$, which is $g(x')$.
Similarly, by substituting the symbol $\hat{x}$ for $x'$, it is captured as Lemma~\ref{lem:recovery_one_patched_sample} and as the condition of Case \textcircled{6}, where the robustness and benignity of an input sample in Case \textcircled{2} of $g$ can be verified.

\begin{lem}[Certification of adversarially patched samples]
\label{lem:recovery_one_patched_sample}
Given an sample $\hat{x}$,
if $[\exists {\textsc{m}_1}\in\mathbb{M}_\mathbb{P},\forall$ ${\textsc{m}_2}$,${\textsc{m}_3}$,${\textsc{m}_4}\in\mathbb{M}_\mathbb{P}, f(\hat{x}\odot\textsc{m}_1\odot\textsc{m}_2\odot\textsc{m}_3\odot\textsc{m}_4)= f(\hat{x}\odot{\textsc{m}_1})]$ holds,
$[\forall \hat{x}'\in\mathbb{A}_\mathbb{P}(\hat{x}),
g(\hat{x}')=g(\hat{x})
]$ holds.
\end{lem}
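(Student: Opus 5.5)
The plan is to prove the stronger claim that \emph{every} $\hat{x}'\in\mathbb{A}_\mathbb{P}(\hat{x})$ satisfies $g(\hat{x}')=y$, where $\textsc{m}^\ast$ denotes the mask witnessing the hypothesis and $y:=f(\hat{x}\odot\textsc{m}^\ast)$. Since $\hat{x}$ is a trivial patched neighbor of itself, this gives $g(\hat{x})=y$ as well, and hence $g(\hat{x}')=g(\hat{x})$. Throughout I use three facts: $\odot$ with masks is commutative and idempotent ($\textsc{m}\odot\textsc{m}=\textsc{m}$), so re-applying a mask stays inside the hypothesis; masks of up to fourth order beyond $\textsc{m}^\ast$ are all covered by it; and I read $\mathbb{M}$ in case \textcircled{3} as $\mathbb{M}_\mathbb{P}$.

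\textbf{Step 1 (neutralising the patch).} Fix $\hat{x}'\in\mathbb{A}_\mathbb{P}(\hat{x})$ with patch region $\textsc{p}$. By the covering property there is $\textsc{m}_p\in\mathbb{M}_\mathbb{P}$ with $\textsc{p}\odot\textsc{m}_p=\mathbf{O}$. Then $\hat{x}'\odot\textsc{m}_p=\hat{x}\odot\textsc{m}_p$, and so $\hat{x}'\odot\textsc{m}_p\odot\textsc{n}=\hat{x}\odot\textsc{m}_p\odot\textsc{n}$ for any product $\textsc{n}$ of masks. Consequently any mutant of $\hat{x}'$ that contains both $\textsc{m}^\ast$ and $\textsc{m}_p$, together with at most three other masks in total, is a mutant of $\hat{x}\odot\textsc{m}^\ast$ of order at most four. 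By the hypothesis (padding with repeated masks) it is therefore predicted as $y$.

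\textbf{Step 2 (case \textcircled{4} is never reached).} I show that case \textcircled{2} or case \textcircled{3} is enabled for $\hat{x}'$.
\begin{itemize}
\item If $\textsc{m}_p=\textsc{m}^\ast$, take $\textsc{m}_1=\textsc{m}^\ast$ in case \textcircled{2}. Both $f(\hat{x}'\odot\textsc{m}^\ast)$ and every $f(\hat{x}'\odot\textsc{m}^\ast\odot\textsc{m}_2\odot\textsc{m}_3)$ equal $y$ by Step 1.
\item Otherwise, take the pair $(\textsc{m}^\ast,\textsc{m}_p)$ in case \textcircled{3}. Both $f(\hat{x}'\odot\textsc{m}^\ast\odot\textsc{m}_p)$ and every $f(\hat{x}'\odot\textsc{m}^\ast\odot\textsc{m}_p\odot\textsc{m}_3\odot\textsc{m}_4)$ equal $y$.
\end{itemize}
So if case \textcircled{1} fails, one of cases \textcircled{2}--\textcircled{3} is taken.

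\textbf{Step 3 (whichever branch or witness $g$ picks, it returns $y$).} This is the step I expect to be the main obstacle, because cases \textcircled{2} and \textcircled{3} are existential and $g$ may select a witness different from $\textsc{m}^\ast$ or $\textsc{m}_p$. The remedy is to refine the chosen witness by the masks $\textsc{m}^\ast,\textsc{m}_p$ and use the consistency that the branch condition itself guarantees.
\begin{itemize}
\item \emph{Case \textcircled{1} taken.} Every second-order mutant equals $f(\hat{x}')$. In particular $f(\hat{x}')=f(\hat{x}'\odot\textsc{m}^\ast\odot\textsc{m}_p)=y$.
\item \emph{Case \textcircled{2} taken with witness $\textsc{a}$.} Choosing $\textsc{m}_2=\textsc{m}^\ast$ and $\textsc{m}_3=\textsc{m}_p$ in its condition gives $f(\hat{x}'\odot\textsc{a})=f(\hat{x}'\odot\textsc{a}\odot\textsc{m}^\ast\odot\textsc{m}_p)$. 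This mutant contains $\textsc{m}^\ast$, $\textsc{m}_p$ and one extra mask, so it equals $y$ by Step 1.
\item \emph{Case \textcircled{3} taken with witnesses $\textsc{a},\textsc{b}$.} Choosing $\textsc{m}_3=\textsc{m}^\ast$ and $\textsc{m}_4=\textsc{m}_p$ gives $f(\hat{x}'\odot\textsc{a}\odot\textsc{b})=f(\hat{x}'\odot\textsc{a}\odot\textsc{b}\odot\textsc{m}^\ast\odot\textsc{m}_p)$. This mutant has $\textsc{m}^\ast$, $\textsc{m}_p$ and two extra masks, so it is a fourth-order mutant of $\hat{x}\odot\textsc{m}^\ast$ and equals $y$.
\end{itemize}

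Hence $g(\hat{x}')=y$ for all $\hat{x}'\in\mathbb{A}_\mathbb{P}(\hat{x})$. Applying this to $\hat{x}'=\hat{x}$, where $\textsc{m}_p$ can be any mask, yields $g(\hat{x})=y$ and concludes the lemma. The delicate bookkeeping is counting the masks in Step 3. Case \textcircled{3} uses exactly the full fourth-order budget of the hypothesis: two witness masks plus $\textsc{m}_p$ beyond $\textsc{m}^\ast$, i.e.\ $\textsc{m}_2,\textsc{m}_3,\textsc{m}_4$. This is why the lemma needs fourth-order rather than third-order consistency.
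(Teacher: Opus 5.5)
Your proof is correct and uses the same mechanism as the paper's. A covering mask $\textsc{m}_p$ gives $\hat{x}'\odot\textsc{m}_p=\hat{x}\odot\textsc{m}_p$, so any mutant of $\hat{x}'$ that carries both $\textsc{m}_p$ and the witness $\textsc{m}^\ast$ falls under the fourth-order hypothesis. A branch-by-branch check over cases \textcircled{1}--\textcircled{3}, with case \textcircled{4} unreachable, then shows that $g$ returns $f(\hat{x}\odot\textsc{m}^\ast)$.

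\textbf{Organisation.} The paper first computes $g(\hat{x})$ separately. For $\hat{x}'$ it then splits on whether the witness itself covers the patch (reducing to the analysis of $\hat{x}$) or another mask does (sub-cases 2.1--2.3). You instead prove the uniform claim $g(\cdot)=f(\hat{x}\odot\textsc{m}^\ast)$ on all of $\mathbb{A}_\mathbb{P}(\hat{x})$ and obtain $g(\hat{x})$ from the trivial neighbour. You need the distinction $\textsc{m}_p=\textsc{m}^\ast$ only in the enabling step, because case \textcircled{3} requires $\textsc{m}_2\neq\textsc{m}_1$.

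\textbf{Witness choice.} Your Step 3 works whatever witness $g$ selects, since you plug $\textsc{m}^\ast$ and $\textsc{m}_p$ into the universally quantified slots of the chosen witness. The paper writes the returned labels as $f(\hat{x}\odot\textsc{m}_1)$, $f(\hat{x}'\odot\textsc{m}_1)$ and $f(\hat{x}'\odot\textsc{m}_1\odot\textsc{m}_2)$, tacitly assuming $g$ picks exactly the hypothesis and covering masks. In its sub-case 2.2, where $\textsc{m}_1$ does not cover the patch, the step $f(\hat{x}'\odot\textsc{m}_1)=f(\hat{x}\odot\textsc{m}_1)$ is not justified as written. The intermediate fact displayed there, quantified over $\textsc{m}_3$, is precisely what your substitution uses to close this.

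\textbf{What each buys.} The paper's layout mirrors its type-specific narrative, making explicit which branch handles which input type. Yours is shorter, fully rigorous, and shows that $g$'s output on these inputs does not depend on its witness-selection rule.

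\textbf{Wording fix.} In Step 1, ``at most three other masks in total'' must count $\textsc{m}_p$ among the three. Likewise, ``fourth-order mutant of $\hat{x}\odot\textsc{m}^\ast$'' should read ``fourth-order mutant of $\hat{x}$ containing $\textsc{m}^\ast$''. Your actual counts in Steps 2 and 3 already respect this, as your closing remark says.
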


In short, the labels for a sample $x$ of the benign type and its patched neighbors are recovered through Case \textcircled{1} and either Case \textcircled{1} (for benign patch) or \textcircled{2} (for adversarial patch), respectively. Similarly, these for an adversarial sample with one patch and its patched neighbors are recovered through Cases \textcircled{2} and Cases \textcircled{1}--\textcircled{3}, respectively.

With these two conditions (shown as cases \textcircled{5} and \textcircled{6} in Fig. \ref{fig:MRCert_formulation}), we get MRCert's certification function by Thm.~\ref{thm:recovery}.

\begin{thm}[Certification of samples]
\label{thm:recovery}
Given an arbitrary sample $\hat{x}$,
if $[\forall {\textsc{m}_1},{\textsc{m}_2},{\textsc{m}_3}\in\mathbb{M}_\mathbb{P}, f(\hat{x}{\odot}\textsc{m}_1{\odot}\textsc{m}_2{\odot}\textsc{m}_3)$ $= f(\hat{x})] \lor
[\exists {\textsc{m}_1}{\in}\mathbb{M}_\mathbb{P},\forall {\textsc{m}_2}, {\textsc{m}_3}, {\textsc{m}_4}\in\mathbb{M}_\mathbb{P}, f(\hat{x}{\odot}\textsc{m}_1{\odot}\textsc{m}_2{\odot}\textsc{m}_3{\odot}\textsc{m}_4){=}f(\hat{x}$ ${\odot}{\textsc{m}_1})]$ holds (i.e., $c_r(\hat{x})=\textit{True}$),
$\forall \hat{x}'{\in}\mathbb{A}_\mathbb{P}(\hat{x}),
g(\hat{x}')=g(\hat{x})
$.
\end{thm}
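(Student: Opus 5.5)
The theorem follows from Lemma~\ref{lem:recovery_benign} and Lemma~\ref{lem:recovery_one_patched_sample} by a case split on the disjunction in its hypothesis. If the first disjunct holds, i.e. $\forall \textsc{m}_1,\textsc{m}_2,\textsc{m}_3\in\mathbb{M}_\mathbb{P}$, $f(\hat{x}\odot\textsc{m}_1\odot\textsc{m}_2\odot\textsc{m}_3)=f(\hat{x})$, then Lemma~\ref{lem:recovery_benign} applies directly and gives $g(\hat{x}')=g(\hat{x})$ for every $\hat{x}'\in\mathbb{A}_\mathbb{P}(\hat{x})$. If the second disjunct holds, i.e. some $\textsc{m}_1$ has all fourth-order mutants through it agreeing with $f(\hat{x}\odot\textsc{m}_1)$, then Lemma~\ref{lem:recovery_one_patched_sample} applies verbatim and yields the same conclusion. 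The disjunction therefore gives the consequent in either branch.

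Two consistency checks are needed so that the statement matches the definition of $c_r$ in Fig.~\ref{fig:MRCert_formulation}, where case \textcircled{5} is tied to case \textcircled{1} of $g$ and case \textcircled{6} to case \textcircled{2}.

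\textbf{First disjunct versus case \textcircled{1}.} If the first disjunct holds, then case \textcircled{1} of $g$ is taken. The reason is that second-order mutants are special third-order mutants: re-applying a mask yields a higher-order mutant, since $\textsc{m}\odot\textsc{m}=\textsc{m}$ for binary masks. Setting $\textsc{m}_3=\textsc{m}_2$ therefore turns the third-order condition into the second-order condition of case \textcircled{1}.

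\textbf{Second disjunct versus cases \textcircled{1} and \textcircled{2}.} If the second disjunct holds, then case \textcircled{2}'s condition is satisfied, again by collapsing $\textsc{m}_4=\textsc{m}_3$. So $g(\hat{x})$ is either $f(\hat{x})$ via case \textcircled{1}, or $f(\hat{x}\odot\textsc{m}_1)$ for some witness via case \textcircled{2}.

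The main obstacle is this mismatch: the theorem is stated without the side condition of which case $g$ takes, and the second disjunct alone does not force case \textcircled{1} to fail. I do not need to resolve which branch $g$ takes, because each lemma is stated for an arbitrary $\hat{x}$ under exactly its own hypothesis and already concludes $g(\hat{x}')=g(\hat{x})$, absorbing that case analysis internally. I will argue this explicitly by citing the lemmas as black boxes. If a reader doubts Lemma~\ref{lem:recovery_one_patched_sample} covers the subcase where case \textcircled{1} fires, I will add a remark. In that subcase $g(\hat{x})=f(\hat{x})$, and the first-order mutant witnessing case \textcircled{2} has prediction $f(\hat{x}\odot\textsc{m}_1)=f(\hat{x}\odot\textsc{m}_1\odot\textsc{m}_1)=f(\hat{x})$ by case \textcircled{1}. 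Hence the returned labels coincide, and the lemma's argument carries over with $g(\hat{x})=f(\hat{x}\odot\textsc{m}_1)$ in both subcases.
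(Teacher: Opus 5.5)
Your proposal is correct and is essentially the paper's proof, which splits on the two disjuncts and invokes Lemma~\ref{lem:recovery_benign} or Lemma~\ref{lem:recovery_one_patched_sample} as a black box. Your extra remarks on which branch of $g$ fires are accurate but not needed for the theorem as stated. Your step $f(\hat{x}\odot\textsc{m}_1)=f(\hat{x}\odot\textsc{m}_1\odot\textsc{m}_1)=f(\hat{x})$ is, if anything, a more direct justification than the corresponding step in the paper's proof of Lemma~\ref{lem:recovery_one_patched_sample}.
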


If a benign sample and all its patched samples can be certified  \cite{yatsura2023certified}, it is called \textbf{round-trip certified}.

Suppose $c_r$ certifies a benign sample $x$ itself.
Then if all its patched samples are guaranteed to satisfy the antecedent of Thm.~\ref{thm:recovery} to be certified, $x$ is round-trip certified.
For masking-based defenders,
since it is unknown which mask will cover the patch in a patched sample created from a benign sample, to make the patched sample fulfill the antecedent of Thm.\ref{thm:recovery} that it has a first-order mutant, whose all four-order mutants generated from this mutant are predicted with the same label as this first-order mutant, we also need to make sure that all first-order mutants of the benign sample $x$ fulfill this antecedent, which is captured by Thm.~\ref{thm:recovery_double}.

\begin{thm}[Round-trip certification of benign samples]
\label{thm:recovery_double}
Given a benign sample ${x}$, if
$[\forall {\textsc{m}_1},{\textsc{m}_2},{\textsc{m}_3},{\textsc{m}_4}\in\mathbb{M}_\mathbb{P}, f({x}\odot$
$\textsc{m}_1$$\odot$$\textsc{m}_2$$\odot\textsc{m}_3$$\odot\textsc{m}_4)=$
$f(x)]$ holds
(i.e., $c_r^2(x)=\textit{True}$),
$[\forall {x}'
\in
\mathbb{A}_\mathbb{P}
$
$({x}),
g({x}')$
$
=
g({x})
\land
c_r({x}')=\textit{True}\land c_r({x})=\textit{True}]
$ holds.
\end{thm}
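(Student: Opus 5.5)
The plan is to reduce everything to Lemma~\ref{lem:recovery_benign}, Lemma~\ref{lem:recovery_one_patched_sample} and Thm.~\ref{thm:recovery}. We use two elementary facts about masks: masks are idempotent and commute under $\odot$, and the covering property. From these, any lower-order mutant is a higher-order one. For instance, $x\odot\textsc{m}_1\odot\textsc{m}_2\odot\textsc{m}_3 = x\odot\textsc{m}_1\odot\textsc{m}_2\odot\textsc{m}_3\odot\textsc{m}_3$. The hypothesis $c_r^2(x)$ therefore implies that every mutant of $x$ of order $1$ through $4$ is predicted $f(x)$.

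\textbf{Step 1} ($c_r(x)=\textit{True}$ and $g(x)=f(x)$). The consequence just noted gives case \textcircled{1} of $g$ for $x$, so $g(x)=f(x)$. It also gives the case \textcircled{5} condition. Hence $c_r(x)=\textit{True}$.

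\textbf{Step 2} ($g(x')=g(x)$ for all $x'\in\mathbb{A}_\mathbb{P}(x)$). Since case \textcircled{5} holds for $x$, Lemma~\ref{lem:recovery_benign} yields $g(x')=g(x)$ directly.

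\textbf{Step 3} ($c_r(x')=\textit{True}$). This is the substantive part. Fix $x'$ with patch $\textsc{p}$, and pick $\textsc{m}^*\in\mathbb{M}_\mathbb{P}$ with $\textsc{p}\odot\textsc{m}^*=\mathbf{O}$. Then $x'\odot\textsc{m}^* = x\odot\textsc{m}^*$, because $x'$ and $x$ agree outside $\textsc{p}$ and $\textsc{m}^*$ zeroes $\textsc{p}$. Consequently, for all $\textsc{m}_2,\textsc{m}_3,\textsc{m}_4$,
\[ f(x'\odot\textsc{m}^*\odot\textsc{m}_2\odot\textsc{m}_3\odot\textsc{m}_4)=f(x\odot\textsc{m}^*\odot\textsc{m}_2\odot\textsc{m}_3\odot\textsc{m}_4)=f(x)=f(x\odot\textsc{m}^*)=f(x'\odot\textsc{m}^*), \]
using $c_r^2(x)$ twice. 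This is exactly the case \textcircled{6} condition for $x'$ with witness $\textsc{m}^*$. It is therefore also the second disjunct of the antecedent of Thm.~\ref{thm:recovery}.

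The main obstacle is that $c_r$ is defined casewise according to which branch $g$ takes on $x'$, and this is where I am least sure the argument closes. If $g$ takes case \textcircled{2} on $x'$, we are done by the display above. If $g$ takes case \textcircled{1}, then $f(x'\odot\textsc{m}_1\odot\textsc{m}_2)=f(x')$ for all $\textsc{m}_1,\textsc{m}_2$. Taking $\textsc{m}_1=\textsc{m}_2=\textsc{m}^*$ gives $f(x')=f(x'\odot\textsc{m}^*)=f(x)$. For case \textcircled{5} we need $f(x'\odot\textsc{m}_1\odot\textsc{m}_2\odot\textsc{m}_3)=f(x')$ for all third-order mutants. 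That step is not automatic from case \textcircled{1} alone, since a third-order mutant not including $\textsc{m}^*$ may still contain the patch.

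So I would fall back on reading the theorem's claim $c_r(x')=\textit{True}$ as the disjunctive antecedent of Thm.~\ref{thm:recovery}, which is how the statement of Thm.~\ref{thm:recovery} presents $c_r$. Under that reading the display already suffices. Case \textcircled{2} in $g$ is then reached whenever case \textcircled{1} fails, because the case \textcircled{6} witness $\textsc{m}^*$ also witnesses case \textcircled{2}. I would state this interpretation explicitly in the proof.
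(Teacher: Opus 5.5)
Your proposal is correct and follows the paper's proof essentially step for step. $c_r^2(x)$ yields case \textcircled{1} and the case \textcircled{5} condition for $x$; Lemma~\ref{lem:recovery_benign} gives $g(x')=g(x)$; and the covering mask $\textsc{m}^*$ with $x'\odot\textsc{m}^*=x\odot\textsc{m}^*$ gives the case \textcircled{6} condition for $x'$. Your worry about the casewise definition of $c_r$ in Fig.~\ref{fig:MRCert_formulation} is well founded, since case \textcircled{5} can fail for $x'$ when $g$ takes case \textcircled{1}. The paper resolves this exactly as you propose: it tacitly uses the disjunctive form of $c_r$ from Thm.~\ref{thm:recovery} when it concludes $c_r(x')=\textit{True}$ ``by Lemma~\ref{lem:recovery_one_patched_sample}''.
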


$c_r^2$ adopts the antecedent of the round-trip certification theorem (case \textcircled{7} in Fig. \ref{fig:MRCert_formulation}), which ensures $x$ and all its patched samples fulfill Thm.~\ref{thm:recovery}.
It aligns with the certification condition against two patches
by the Patchcleanser variant in \S\ref{sec:moti}.

Formal proofs and the extension of MRCert to handle up to $N$ patches are provided in the appendix.

\section{Evaluation}

\begin{table}[]
\centering
\captionsetup{font={small}}
\caption{Results of smoothing-based recovery defenders on clean ImageNet test set with patch size 32 pixels (in \%).}\label{tab:smooth_results}
{
\begin{tabular}{|c|c|c|}
\hline
Defender & $acc_{clean}$ & $acc_{cert}$ \\ \hline
S-ViT \cite{salman2022certified} & 69.3 & 38.3 \\ \hline
ViP \cite{li2022vip} & 70.8 & 41.4 \\ \hline
VOT \cite{zhou2024crosscert} & 71.4 & 41.2 \\ \hline
\end{tabular}
}
\end{table}

\subsection{Experimental Setup}
\subsubsection{Environment and Dataset}
The evaluation is run on an Ubuntu 20.04 machine equipped with four Nvidia 3090 GPUs.
Following \cite{xiang2022patchcleanser}, we
adopt 1000-class Image-Net \cite{deng2009imagenet}, 10-class CIFAR10 \cite{krizhevsky2009learning}, and 10-class ImageNette as our datasets.
All samples are resized to $224$X$224$ pixels.

\begin{table*}[]
\caption{Results
on
$\mathbb{D}_{pat}$ (patched samples from a real attack) with patch sizes of 16 pixels and 32 pixels (in \%). }\label{tab:attack}
\resizebox{\linewidth}{!}{
\begin{tabular}{|c||ccccccc||ccccccc|}
\hline
Patch   size & \multicolumn{7}{c||}{16   pixels}                                                                                                                                            & \multicolumn{7}{c|}{32 pixels}                                                                                                                                             \\ \hline
Defender    & \multicolumn{3}{c|}{MRCert}                                                       & \multicolumn{3}{c|}{PC}                                                          & Base & \multicolumn{3}{c|}{MRCert}                                                       & \multicolumn{3}{c|}{PC}                                                         & Base \\ \hline
Metrics      & \multicolumn{1}{c|}{$acc_{clean}$}  & \multicolumn{1}{c|}{$acc_{cert}$}  & \multicolumn{1}{c|}{$acc^{adv}_{cert}$}  & \multicolumn{1}{c|}{$acc_{clean}$}  & \multicolumn{1}{c|}{$acc_{cert}$}  & \multicolumn{1}{c|}{$acc^{adv}_{cert}$} & $acc_{clean}$  & \multicolumn{1}{c|}{$acc_{clean}$}  & \multicolumn{1}{c|}{$acc_{cert}$}  & \multicolumn{1}{c|}{$acc^{adv}_{cert}$}  & \multicolumn{1}{c|}{$acc_{clean}$}  & \multicolumn{1}{c|}{$acc_{cert}$} & \multicolumn{1}{c|}{$acc^{adv}_{cert}$} & $acc_{clean}$  \\ \hline
ImageNet     & \multicolumn{1}{c|}{62.4} & \multicolumn{1}{c|}{44.2} & \multicolumn{1}{c|}{35.1} & \multicolumn{1}{c|}{67.6} & \multicolumn{1}{c|}{12.8} & \multicolumn{1}{c|}{0.0} & 14.6 & \multicolumn{1}{c|}{49.0} & \multicolumn{1}{c|}{24.8} & \multicolumn{1}{c|}{24.8} & \multicolumn{1}{c|}{62.2} & \multicolumn{1}{c|}{0.0} & \multicolumn{1}{c|}{0.0} & 0.0  \\ \hline
ImageNette   & \multicolumn{1}{c|}{97.4} & \multicolumn{1}{c|}{93.8} & \multicolumn{1}{c|}{88.1} & \multicolumn{1}{c|}{98.4} & \multicolumn{1}{c|}{42.0} & \multicolumn{1}{c|}{0.0} & 54.6 & \multicolumn{1}{c|}{95.2} & \multicolumn{1}{c|}{80.2} & \multicolumn{1}{c|}{80.2} & \multicolumn{1}{c|}{98.0} & \multicolumn{1}{c|}{0.0} & \multicolumn{1}{c|}{0.0} & 0.0  \\ \hline
CIFAR10      & \multicolumn{1}{c|}{93.0} & \multicolumn{1}{c|}{77.4} & \multicolumn{1}{c|}{41.5} & \multicolumn{1}{c|}{94.4} & \multicolumn{1}{c|}{61.0} & \multicolumn{1}{c|}{0.0} & 67.2 & \multicolumn{1}{c|}{82.4} & \multicolumn{1}{c|}{46.8} & \multicolumn{1}{c|}{46.6} & \multicolumn{1}{c|}{91.0} & \multicolumn{1}{c|}{0.2} & \multicolumn{1}{c|}{0.0} & 0.8  \\ \hline
\end{tabular}
}
\end{table*}

\subsubsection{Models and Baselines}
We adopt the Vision Transformer (ViT) as our base model, which achieves the state-of-the-art in many patch robustness certification defenders \cite{xiang2024patchcure,li2022vip,salman2022certified,xiang2022patchcleanser,patchcensor}.

We adopt the state-of-the-art masking-based defender
PatchCURE (\textbf{PC}) \cite{xiang2024patchcure} and state-of-the-art smoothing-based defender
VOT-CrossCert (\textbf{VOT}) from CrossCert \cite{zhou2024crosscert}.
We arm VOT with the methodology proposed by \cite{yatsura2023certified} for post-deployment certification for smoothing-based defenders
to compare with \textbf{MRCert} in our infrastructure.
They share the same PC's ViT-SRF as the base model.
We extract the results of peer smoothing-based recovery defenders from \cite{li2022vip} and \cite{salman2022certified} to compare with VOT in Table~\ref{tab:smooth_results}.

\subsubsection{Metrics}
\label{sec:metrics}
Let ${x}$ be a benign sample with the true label $y_0$ in a clean test dataset $\mathbb{D}$, and $R = \langle g(x), c(x),  c^2_r(x)\rangle$ be a certified recovery defender.
We adopt \textbf{clean accuracy}
$acc_{clean}=\nicefrac{\mid\{{x}\in\mathbb{D}\mid g({x})=y_0\}\mid}{\mid\mathbb{D}\mid}$, which evaluates the standard performance of certified recovery as a classifier,
\textbf{certified accuracy}
$acc_{cert}=\nicefrac{\mid\{{x}\in\mathbb{D}\mid g({x})=y_0\land c_r(x)=\textit{True}\}\mid}{\mid\mathbb{D}\mid}$, and
\textbf{round-trip certified accuracy}
$acc_{cert^2}=\nicefrac{\mid\{{x}\in\mathbb{D}\mid g({x})=y_0\land c_r^2(x)=\textit{True}\}\mid}{\mid\mathbb{D}\mid}$ as our metrics.
The clean accuracy of
the base classification model $f$ is
$\nicefrac{\mid\{{x}\in\mathbb{D}\mid f({x})=y_0\}\mid}{\mid\mathbb{D}\mid}$.
Existing works commonly adopt the first two metrics,
and round-trip certified accuracy is proposed in \cite{yatsura2023certified}.
For RQ1, $\mathbb{D}$ is a set of patched samples (including benignly and adversarially patched, one per benign) obtained by attacking 500 random benign samples using the procedure adopted from  \cite{levine2020randomized}, denoted as $\mathbb{D}_{pat}$.
We  measure each defender's ability to certify adversarially patched samples by  \textbf{adversarial certified accuracy}
$acc_{cert}^{adv}=\nicefrac{\mid\{\hat{x}\in\mathbb{D}_{pat}\mid g(\hat{x})= y_0\land c_r(\hat{x})=\textit{True} \land f(\hat{x})\neq y_0\}\mid}{\mid\{\hat{x}\in\mathbb{D}_{pat}\mid f(\hat{x})\neq y_0\}\mid}$.

The appendix presents more details on our setup.

\begin{table}[t]
\centering
\captionsetup{font={small}}
\caption{Results on ImageNet benign samples with patch size 16 (upper) and 32 (bottom) pixels (in \%).}\label{tab:benign_imagenet}

\begin{tabular}{|c|ccc|}
\hline
Dataset      & \multicolumn{3}{c|}{ImageNet}                                                        \\ \hline
Patch Size   & \multicolumn{3}{c|}{16 pixels}                                                            \\ \hline
Metrics      & \multicolumn{1}{c|}{$acc_{{clean}}$} & \multicolumn{1}{c|}{$acc_{{cert}}$} & $acc_{{cert}^2}$ \\ \hline
VOT       & \multicolumn{1}{c|}{71.4}      & \multicolumn{1}{c|}{50.9}         & 30.5            \\ \hline
PC & \multicolumn{1}{c|}{{81.6}}      & \multicolumn{1}{c|}{{63.5}}         & $\oplus$             \\ \hline
MRCert       & \multicolumn{1}{c|}{{81.3}}      & \multicolumn{1}{c|}{57.8}         & {38.7}            \\ \hline\hline
Patch Size   & \multicolumn{3}{c|}{32 pixels}                                                            \\ \hline
Metrics      & \multicolumn{1}{c|}{$acc_{{clean}}$} & \multicolumn{1}{c|}{$acc_{{cert}}$} & $acc_{{cert}^2}$ \\ \hline
VOT       & \multicolumn{1}{c|}{71.4}      & \multicolumn{1}{c|}{41.2}         & 12.1            \\ \hline
PC & \multicolumn{1}{c|}{{81.4}}      & \multicolumn{1}{c|}{{53.4}}         & $\oplus$             \\ \hline
MRCert       & \multicolumn{1}{c|}{{81.3}}      & \multicolumn{1}{c|}{39.5}         & {15.6}            \\ \hline
\multicolumn{4}{l}{Note: $\oplus$=theoretically infeasible}
\end{tabular}
\end{table}

\begin{table*}[]
\caption{Results of certified recovery defenders on
ImageNette and CIFAR10 with patch size 16 and 32 pixels (in \%).}
\captionsetup{font={small}}\label{tab:benign_imagenette_and_cifar}
\resizebox{\linewidth}{!}{
\begin{tabular}{|c||cccccc||cccccc|}
\hline
Dataset      & \multicolumn{6}{c||}{ImageNette}                                                                                                                                                                  & \multicolumn{6}{c|}{CIFAR10}                                                                                                                                                                     \\ \hline
Patch Size   & \multicolumn{3}{c|}{16 pixels}                                                                                 & \multicolumn{3}{c||}{32 pixels}                                                            & \multicolumn{3}{c|}{16 pixels}                                                                                 & \multicolumn{3}{c|}{32 pixels}                                                            \\ \hline
Metrics      & \multicolumn{1}{c|}{$acc_{{clean}}$} & \multicolumn{1}{c|}{$acc_{{cert}}$} & \multicolumn{1}{c|}{$acc_{{cert}^2}$} & \multicolumn{1}{c|}{$acc_{{clean}}$} & \multicolumn{1}{c|}{$acc_{{cert}}$} & $acc_{{cert}^2}$ & \multicolumn{1}{c|}{$acc_{{clean}}$} & \multicolumn{1}{c|}{$acc_{{cert}}$} & \multicolumn{1}{c|}{$acc_{{cert}^2}$} & \multicolumn{1}{c|}{$acc_{{clean}}$} & \multicolumn{1}{c|}{$acc_{{cert}}$} & $acc_{{cert}^2}$ \\ \hline
VOT       & \multicolumn{1}{c|}{96.2}      & \multicolumn{1}{c|}{91.3}         & \multicolumn{1}{c|}{80.5}            & \multicolumn{1}{c|}{96.2}      & \multicolumn{1}{c|}{87.6}         & 57.3            & \multicolumn{1}{c|}{89.3}      & \multicolumn{1}{c|}{76.9}         & \multicolumn{1}{c|}{56.9}            & \multicolumn{1}{c|}{89.3}      & \multicolumn{1}{c|}{68.6}         & {28.6}            \\ \hline
PC & \multicolumn{1}{c|}{{99.5}}      & \multicolumn{1}{c|}{{97.2}}         & \multicolumn{1}{c|}{$\oplus$}             & \multicolumn{1}{c|}{{99.4}}      & \multicolumn{1}{c|}{{93.7}}         & $\oplus$             & \multicolumn{1}{c|}{{98.5}}      & \multicolumn{1}{c|}{{90.0}}         & \multicolumn{1}{c|}{$\oplus$}             & \multicolumn{1}{c|}{{98.2}}      & \multicolumn{1}{c|}{{80.9}}         & $\oplus$             \\ \hline
MRCert       & \multicolumn{1}{c|}{{99.3}}      & \multicolumn{1}{c|}{95.7}         & \multicolumn{1}{c|}{{87.4}}            & \multicolumn{1}{c|}{{98.9}}      & \multicolumn{1}{c|}{89.6}         & {66.7}            & \multicolumn{1}{c|}{{98.1}}      & \multicolumn{1}{c|}{86.2}         & \multicolumn{1}{c|}{{63.3}}            & \multicolumn{1}{c|}{{97.6}}      & \multicolumn{1}{c|}{67.4}         & {28.2}            \\ \hline
\multicolumn{4}{l}{Note: $\oplus$=theoretically infeasible}
\end{tabular}
}
\end{table*}

\subsection{Experimental Results and Data Analysis}\label{sec:Experimental_Results}

\subsubsection{RQ1: How well does MRCert empirically provide a robustness certificate at post-deployment time?}

Table~\ref{tab:attack} summarizes the results of the masking-based defenders (MRCert and PC) on $\mathbb{D}_{pat}$.
For fairness, we only compare PC and MRCert against this empirical attack.

The left of Table~\ref{tab:attack} shows the results when the patch size is 16 pixels.
Maintaining high accuracy across a mixture of benign and adversarial patches simulates the complexity of real-world post-deployment, where defenders must simultaneously handle natural perturbation and powerful attacks.
Although MRCert's $acc_{clean}$ is lower but close to PC's, it achieves much higher $acc_{cert}$ at a 16-pixel patch size by 3.45x, 2.23x, and 1.27x on ImageNet, ImageNette, and CIFAR10, respectively, underscoring its robust certification capability for post-deployment scenarios.
Specifically, MRCert can guarantee the certification on adversarially patched samples of round-trip certified samples, and indeed achieves non-trivial adversarial certified accuracy, while PC gets 0.
By type-specific handling of adversarially patched samples and benign (benignly patched) samples, MRCert attains significantly higher certified accuracy while maintaining clean accuracy comparable to PC, indicating predictions of more inputs are guaranteed to be robust and benign.

PC becomes even worse when the patch size increases to 32 pixels: nearly all patched samples are adversarial.
Consequently, its $acc_{cert}$ vanishes across all datasets, except for a negligible fraction of non-adversarial samples that remain certified on CIFAR-10, hitting the wall in $acc_{cert}$.
Still, PC gets 0 in $acc^{adv}_{cert}$.
Although the clean accuracy of MRCert suffers from a decrease for all datasets,
it still maintains a high $acc_{cert}$ for patch size of 32 pixels,
where $acc_{cert}$ is 56.1\% (=$\nicefrac{24.8}{44.2}$), 85.5\% (=$\nicefrac{80.2}{93.8}$), and 60.5\% (=$\nicefrac{46.8}{77.4}$) of that for 16 pixels, respectively for ImageNet, ImageNette, and CIFAR10, which indicates that the type-specific design in MRCert still works very well against such a tough attack.
While PC exhibits higher $acc_{clean}$, contingent on prior knowledge of the patch count for recovery (one patch), it fails to provide any formal certified guarantees for these inputs.

\subsubsection{RQ2: How well does MRCert provide a theoretical bound for its performance at post-deployment time?}

Following standard practices in prior works, we compare the theoretical lower bounds derived from benign samples to assess the defenders' performance on adversarially patched samples.
$acc_{cert}$ and $acc_{{cert}^2}$
in this RQ characterize the model's ability to issue robustness certificates for benign and adversarial inputs at post-deployment time, respectively.

Tables \ref{tab:benign_imagenet} and \ref{tab:benign_imagenette_and_cifar} summarize the results on benign samples in the three benchmarks.
Note that PC is incapable of providing $acc_{cert^2}$ (its original evaluation does not measure $acc_{cert^2}$, and its fundamental infeasibility is formalized in $\S \ref{sec:moti}$).

\emph{The findings on ImageNet:}
As shown in Table~\ref{tab:benign_imagenet}, MRCert outperforms VOT in $acc_{cert^2}$ by 8.2\% and 3.5\% for the two patch sizes, highlighting its superior capability in certifying adversarial samples.
In $acc_{cert}$, MRCert outperforms VOT by 6.9\% at a 16-pixel patch size, yet underperforms it by 1.7\% at 32 pixels.
Crucially, VOT suffers from a roughly 10\% decrease in clean accuracy compared to MRCert due to its smoothing mechanism without distinction. This degradation not only compromises prediction quality but also enables a more insidious threat: attackers can craft adversarial samples that yield incorrect predictions yet are erroneously granted a robustness certificate since the benign prediction is incorrect.
While PC exhibits competitive $acc_{clean}$ and even superior $acc_{cert}$ relative to MRCert, its theoretical limits prevent certification of adversarial samples in terms of $acc_{cert^2}$.

\emph{The findings on ImageNette and CIFAR10:}
Similar trends are observed across ImageNette and CIFAR-10.
On ImageNette, MRCert consistently outperforms VOT across all metrics ($acc_{clean}$, $acc_{cert}$, and $acc_{cert^2}$), with average improvements of 2.9\%, 3.2\%, and 8.2\%, respectively. This indicates that while VOT's smoothing mechanism has a relatively smaller impact on $acc_{clean}$ for ImageNette, it still significantly degrades its $acc_{cert}$ and $acc_{cert^2}$.
On CIFAR-10, MRCert achieves performance comparable to PC in $acc_{clean}$ and $acc_{cert}$ while providing the additional $acc_{cert^2}$ guarantee.
Notably, with a 16-pixel patch, MRCert surpasses VOT in all metrics. Even as the patch size increases to 32 pixels, MRCert maintains competitive $acc_{cert}$ and $acc_{cert^2}$ (within 1.2\% and 0.4\% of VOT, respectively) while sustaining a significantly higher $acc_{clean}$ (8.3\% over VOT).

\subsubsection{Summary and discussion}
Across all datasets and patch sizes in the experiment, MRCert achieves a superior balance relative to existing baselines.
MRCert is competitive with PC (which requires prior knowledge of patch counts) in $acc_{clean}$ and additionally provides a non-trivial round-trip certification ($acc_{cert^2}$) to guarantee a benign prediction with a certificate at post-deployment time.
Compared to VOT, MRCert consistently yields higher $acc_{clean}$ and competitive or superior certified accuracies ($acc_{cert}$ and $acc_{cert^2}$).
Critically, the empirical results of MRCert for adversarially patched samples consistently exceed bounds from benign samples (e.g., $acc_{cert}$ in RQ1 to $acc_{cert}^2$ in RQ2), validating the reliability of our theory.
Despite the seemingly low round-trip and adversarial certified accuracies demonstrated by MRCert in absolute measure, our approach establishes a novel paradigm for masking-based recovery, laying the foundation for future work.

We only adopt the original finetuning method from \cite{xiang2024patchcure}, where a specific finetuning method like \cite{saha2023adversarial} for \cite{xiang2022patchcleanser} may improve its performance.
Although MRCert generated more mutants than PC, the certification overhead of MRCert to certify the entire ImageNet validation set is 34 hours on a single 3090 GPU, which is still on par with many peers (41 hours in \cite{zhou2024crosscert}, 38-40 hours in \cite{xiang2022patchcleanser}).
The condition for round-trip certification against one patch in MRCert is equivalent to that against two patches in PC.
Our implementation is built on PC's implementation, which is tightly coupled with its customized ViT and BagNet, and the original PC+BagNet fails to certify samples against two patches of any size.
Both VOT and PC fail to certify any $224 \times 224$ samples against two patches of $48 \times 48$ pixels or larger.
We leave the acceleration of MRCert's implementation, its generalization to broader base models, and its certification against more and larger adversarial patches for future
work.

\section{Related Work}
Recently, adversarial patch attacks have gained increasing attention \cite{wei2024physical} as a significant threat to real-world systems such as commercial autonomous vehicles \cite{wang2025revisiting}, as they model physical-world adversarial objects like a sticker \cite{wei2022adversarial} or a tag \cite{liu2020bias}.
Experimental results \cite{chiang2020certified,li2023sok} reveal the susceptibility of empirical defenses \cite{hayes2018visible,naseer2019local}.
Consequently, certified defenses have emerged to offer deterministic provable guarantees.
Certified recovery ensures the recovery of the benign label of adversarially patched samples.
To overcome the scalability bottlenecks of IBP \cite{chiang2020certified}, extensive research has pivoted toward smoothing \cite{xiang2021patchguard,salman2022certified,li2022vip,zhou2023majority}, pioneered by Derandomized Smoothing \cite{levine2020randomized}, which demonstrates its advancement against those randomized \cite{cohen19,levine2020robustness}.
\cite{yatsura2023certified} adapts Derandomized Smoothing to semantic segmentation and proposes post-deployment certification.
\cite{zhou2024crosscert} cross-checks different recovery defenders for detection.
Masking-based certified detection \cite{li2022vip,patchcensor,zhou2026hicert} has emerged, leveraging masking to guarantee the detection of adversarially patched samples, which cannot provide any guarantee at post-deployment time \cite{yatsura2023certified}.
\cite{xiang2022patchcleanser} further detects their detection outcomes, extending the paradigm to the certified recovery of the benign prediction while preventing the erosion caused by the smoothing process.
\cite{saha2023revisiting} strengthens its training, and
\cite{xiang2024patchcure} improve its efficiency.
However, they all fail to provide any certificate for adversarially patched samples.
Our work represents the first attempt to provide a robustness certificate for both benign and adversarially patched samples in line with this SOTA.

\section{Conclusion}
This work proposes MRCert, the \emph{first} masking-based defender to provide certified recovery at post-deployment time for adversarially patched samples, while simultaneously maintaining high robustness and clean accuracy validated through theorems and experiments.
Our replication package, including the supplementary document (appendix), can be found in \url{https://anonymous.4open.science/r/mrcert/}.

\clearpage
\bibliographystyle{named}
\bibliography{references_short}

\appendix
\clearpage
\thispagestyle{empty}
\twocolumn[{\centering \Large Supplementary Document \\ for \\\textbf{MRCert: Towards Post-deployment Patch Robustness Certification for Adversarially Patched Samples via Type-specific Masking}\par\vspace{1em}}]
\section{Formal Proof}

\begin{Property*}[Infeasibility of Re-certifying Adversarially Patched Samples for Existing Masking-based Recovery Works]
$
\forall x' \in \mathbb{A}_{\mathbb{P}}(x), c_r(x) \land f(x') \neq f(x) \implies \neg c_r(x').
    $
\end{Property*}
\begin{proof}
We know in \cite{xiang2022patchcleanser}, $c_r(\hat{x}):=[\forall {\textsc{m}_1},{\textsc{m}_2}\in\mathbb{M}_\mathbb{P}, f(\hat{x}\odot{\textsc{m}_1}\odot{\textsc{m}_2})= f(\hat{x})]$.
Then if $c_r(x)=\textit{True}$, we know
$
[\forall {\textsc{m}_1},{\textsc{m}_2}\in\mathbb{M}_\mathbb{P}, f({x}\odot{\textsc{m}_1}\odot{\textsc{m}_2})= f({x})]
$.
We also know $\forall \textsc{p}\in\mathbb{P}, \exists \textsc{m}\in\mathbb{M}_\mathbb{P},
\textsc{m}\odot\textsc{p}=\textsc{p}
$ by definition, which makes
$\forall x'\in\mathbb{A}_\mathbb{P}(x), x'\odot\textsc{m}=x\odot\textsc{m}$ (see Fig.~\ref{fig:mask-overview_concept}).
Therefore,
$\exists x'\in\mathbb{A}_\mathbb{P}(x), f(x'\odot\textsc{m})=f(x\odot\textsc{m})=f(x)$.
We have $f(x')\neq f(x)$ from the antecedent; therefore,
$\exists x'\in\mathbb{A}_\mathbb{P}(x), f(x'\odot\textsc{m})\neq f(x')$.
Since
$c_r({x}')$ requires $[\forall {\textsc{m}_1},{\textsc{m}_2}\in\mathbb{M}_\mathbb{P}, f({x}'\odot{\textsc{m}_1}\odot{\textsc{m}_2})= f({x}')]$,
$c_r(x')=\textit{False}$ holds.
\end{proof}

\begin{lem*}[Certification of benign samples]
\label{lem:recovery_once_attack}

Given
an arbitrary sample $\hat{x}$,
if $[\forall {\textsc{m}_1},{\textsc{m}_2},{\textsc{m}_3}\in\mathbb{M}_\mathbb{P}, f(\hat{x}\odot{\textsc{m}_1}\odot{\textsc{m}_2}\odot{\textsc{m}_3})= f(\hat{x})]$ holds,
$\forall \hat{x}'\in\mathbb{A}_\mathbb{P}(\hat{x}),
g(\hat{x}')=g(\hat{x})
$.
\end{lem*}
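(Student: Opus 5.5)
The plan is to show first that $g(\hat{x})=f(\hat{x})$, and then that every patched neighbor $\hat{x}'$ is also sent by $g$ to $f(\hat{x})$, whichever case of $g$ it falls into. Write $y:=f(\hat{x})$.

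Two elementary facts will be used throughout. First, $\odot$ is elementwise multiplication by $\{0,1\}$-matrices, so masks commute and are idempotent: $\hat{x}\odot\textsc{m}\odot\textsc{m}=\hat{x}\odot\textsc{m}$. Second, for $\hat{x}'\in\mathbb{A}_\mathbb{P}(\hat{x})$ with patch region $\textsc{p}$, the covering property gives some $\textsc{m}^*\in\mathbb{M}_\mathbb{P}$ with $\textsc{p}\odot\textsc{m}^*=\mathbf{O}$. Hence $\hat{x}'\odot\textsc{m}^*=\hat{x}\odot\textsc{m}^*$, exactly as used in the proof of the Property above. Combining the two, $\hat{x}'\odot\textsc{m}^*\odot\textsc{m}_2\odot\cdots=\hat{x}\odot\textsc{m}^*\odot\textsc{m}_2\odot\cdots$ for any further masks, in any order.

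\emph{Step 1 (the unpatched input).} From the hypothesis, setting $\textsc{m}_3=\textsc{m}_2$ and using idempotence, every second-order mutant of $\hat{x}$ predicts $y$. So case \textcircled{1} is taken for $\hat{x}$ and $g(\hat{x})=y$. Setting $\textsc{m}_1=\textsc{m}_2=\textsc{m}_3$ likewise shows that every first-order mutant of $\hat{x}$ predicts $y$.

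\emph{Step 2 (case analysis on $\hat{x}'$).} Fix $\hat{x}'\in\mathbb{A}_\mathbb{P}(\hat{x})$ and its covering mask $\textsc{m}^*$.

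\emph{(a) Case \textcircled{1} holds for $\hat{x}'$.} Then $g(\hat{x}')=f(\hat{x}')$. Instantiating case \textcircled{1} with $\textsc{m}_1=\textsc{m}_2=\textsc{m}^*$ gives $f(\hat{x}')=f(\hat{x}'\odot\textsc{m}^*)=f(\hat{x}\odot\textsc{m}^*)=y$.

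\emph{(b) Case \textcircled{1} fails for $\hat{x}'$.} We show that case \textcircled{2} is then taken. Choose $\textsc{m}_1=\textsc{m}^*$. For all $\textsc{m}_2,\textsc{m}_3$,
\[
f(\hat{x}'\odot\textsc{m}^*\odot\textsc{m}_2\odot\textsc{m}_3)=f(\hat{x}\odot\textsc{m}^*\odot\textsc{m}_2\odot\textsc{m}_3)=y=f(\hat{x}\odot\textsc{m}^*)=f(\hat{x}'\odot\textsc{m}^*),
\]
using the hypothesis and Step 1. So the existential condition of case \textcircled{2} is satisfied, and cases \textcircled{3} and \textcircled{4} are never reached.

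\emph{Step 3 (independence of the witness).} This is the step I expect to be the main obstacle. Case \textcircled{2} returns $f(\hat{x}'\odot\textsc{m}_1)$ for \emph{some} witness $\textsc{m}_1$, and the implementation need not pick $\textsc{m}^*$. So we must show that every witness gives $y$. Let $\textsc{m}_1$ be any witness. Instantiate its universal condition with $\textsc{m}_2=\textsc{m}_3=\textsc{m}^*$ and use commutativity and idempotence:
\[
f(\hat{x}'\odot\textsc{m}_1)=f(\hat{x}'\odot\textsc{m}_1\odot\textsc{m}^*)=f(\hat{x}\odot\textsc{m}_1\odot\textsc{m}^*)=y,
\]
where the last equality is the hypothesis specialized to a second-order mutant. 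Hence $g(\hat{x}')=y=g(\hat{x})$ in every case, which proves the lemma.
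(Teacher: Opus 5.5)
Your proof is correct and follows the same route as the paper's: show that case \textcircled{1} fires on $\hat{x}$, then use a covering mask $\textsc{m}^*$ with $\hat{x}'\odot\textsc{m}^*=\hat{x}\odot\textsc{m}^*$ to show that $\hat{x}'$ either passes case \textcircled{1} with label $f(\hat{x})$ or satisfies case \textcircled{2} with witness $\textsc{m}^*$. Your Step 3 adds rigor the paper lacks: the paper silently assumes case \textcircled{2} returns $f(\hat{x}'\odot\textsc{m}^*)$ for the covering mask, whereas you instantiate $\textsc{m}_2=\textsc{m}_3=\textsc{m}^*$ to show that every admissible witness yields $f(\hat{x})$, so the conclusion holds regardless of which witness $g$ picks.
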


\begin{proof}
We first analyze $g(\hat{x})$.
Since
$[\forall {\textsc{m}_1},{\textsc{m}_2},{\textsc{m}_3}\in\mathbb{M}_\mathbb{P}, f(\hat{x}\odot{\textsc{m}_1}\odot{\textsc{m}_2}\odot{\textsc{m}_3})= f(\hat{x})]
\implies
[\forall {\textsc{m}_1},{\textsc{m}_2}\in\mathbb{M}_\mathbb{P}, f(\hat{x}\odot{\textsc{m}_1}\odot{\textsc{m}_2})= f(\hat{x})]
$,
we know its prediction label
$g(\hat{x})=f(\hat{x})$ output in Case \textcircled{1}.
We then analyze $g(\hat{x}')$.
For $\forall \hat{x}'\in\{\hat{x}' \mid \hat{x}'=(\textsc{J}-\textsc{p}){\odot} \hat{x}+\textsc{p}{\odot} \hat{x}' \land \textsc{p}\in \mathbb{P} \}$ (i.e., $\forall \hat{x}'\in\mathbb{A}_\mathbb{P}(\hat{x})$),
w.l.o.g.,
we let ${\textsc{m}_1}\odot\textsc{p}=\textsc{O}$.
Therefore, we get $\hat{x}'\odot{\textsc{m}_1}=((\textsc{J}-\textsc{p}){\odot} \hat{x}+\textsc{p}{\odot} \hat{x}')\odot{\textsc{m}_1}=\hat{x}\odot{\textsc{m}_1}$.
Then, we get
$[\exists {\textsc{m}_1}\in\mathbb{M}_\mathbb{P}, \forall {\textsc{m}_2}, {\textsc{m}_3}\in\mathbb{M}_\mathbb{P}, f(\hat{x}'\odot{\textsc{m}_1}\odot{\textsc{m}_2}\odot{\textsc{m}_3})= f(\hat{x})]$ (see Fig.~\ref{fig:mask-overview_concept}).
Note that the special cases ${\textsc{m}_2}={\textsc{m}_3}$, ${\textsc{m}_1}={\textsc{m}_2}={\textsc{m}_3}$ are included.
Case 1: Suppose the returned label of $\hat{x}'$ output in Case \textcircled{1} as $f(\hat{x}')$ (i.e., $\forall {\textsc{m}_1}, {\textsc{m}_2}\in\mathbb{M}_\mathbb{P}, f(\hat{x}'{\odot} {\textsc{m}_1} {\odot} {\textsc{m}_2})= f(\hat{x}')$),
then we know $g(\hat{x}')=f(\hat{x}')=f(\hat{x})=g(\hat{x})$.
Case 2: Otherwise, its returned label should be output in Case \textcircled{2} as
$f(\hat{x}'\odot{\textsc{m}_1})$, since $[ \exists{\textsc{m}_1}\in\mathbb{M}_\mathbb{P},\forall {\textsc{m}_2}, {\textsc{m}_3}\in\mathbb{M}_\mathbb{P}, f(\hat{x}'\odot{\textsc{m}_1}\odot{\textsc{m}_2}\odot{\textsc{m}_3})=f(\hat{x}'{\odot} \textsc{m}_1)=f(\hat{x})]$, and further $g(\hat{x}')=f(\hat{x}'\odot{\textsc{m}_1})=f(\hat{x})=g(\hat{x})$.
\end{proof}

\begin{lem*}[Certification of adversarially patched samples]
Given an arbitrary sample $\hat{x}$,
if $[\exists {\textsc{m}_1}\in\mathbb{M}_\mathbb{P},\forall$ ${\textsc{m}_2}$,${\textsc{m}_3}$,${\textsc{m}_4}\in\mathbb{M}_\mathbb{P}, f(\hat{x}\odot\textsc{m}_1\odot\textsc{m}_2\odot\textsc{m}_3\odot\textsc{m}_4)= f(\hat{x}\odot{\textsc{m}_1})]$ holds,
$\forall \hat{x}'\in\mathbb{A}_\mathbb{P}(\hat{x}),
g(\hat{x}')=g(\hat{x})
$.
\end{lem*}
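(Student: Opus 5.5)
The plan is a case analysis on which branch of $g$ fires, first for $\hat{x}$ and then for an arbitrary $\hat{x}'\in\mathbb{A}_\mathbb{P}(\hat{x})$. In every branch I show that the returned label equals $y:=f(\hat{x}\odot\textsc{m}^*)$, where $\textsc{m}^*$ is the witness mask in the hypothesis. Three elementary facts drive everything.
\begin{enumerate}
\item Masks are binary and $\odot$ is commutative, associative and idempotent on them ($\textsc{m}\odot\textsc{m}=\textsc{m}$). Hence every lower-order mutant is a higher-order mutant obtained by repeating a mask, and the order in which masks are applied does not matter.
\item For the patch region $\textsc{p}$ of $\hat{x}'$, the covering property gives a mask $\textsc{m}_p\in\mathbb{M}_\mathbb{P}$ with $\textsc{m}_p\odot\textsc{p}=\mathbf{O}$. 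Hence $\hat{x}'\odot\textsc{m}_p=\hat{x}\odot\textsc{m}_p$, exactly as in the proof of Lemma~\ref{lem:recovery_benign}.
\item Any mutant of $\hat{x}'$ that contains both $\textsc{m}^*$ and $\textsc{m}_p$ among at most four masks equals a mutant of $\hat{x}\odot\textsc{m}^*$ with at most three further masks. By the hypothesis, such a mutant is predicted as $y$.
\end{enumerate}

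\emph{Step 1: $g(\hat{x})=y$.} If Case \textcircled{1} is taken for $\hat{x}$, choose $\textsc{m}_1=\textsc{m}_2=\textsc{m}^*$. This gives $f(\hat{x}\odot\textsc{m}^*)=f(\hat{x})$, so $g(\hat{x})=f(\hat{x})=y$. Otherwise Case \textcircled{2} is satisfiable: $\textsc{m}^*$ itself is a witness, because third-order mutants are fourth-order mutants with a repeated mask. The subtle point is that $g$ may return $f(\hat{x}\odot\textsc{m}_1')$ for a different witness $\textsc{m}_1'$. For such a witness, take its third-order mutant $\hat{x}\odot\textsc{m}_1'\odot\textsc{m}^*\odot\textsc{m}^*$. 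By the Case \textcircled{2} condition for $\textsc{m}_1'$, this mutant is predicted as $f(\hat{x}\odot\textsc{m}_1')$. By commutativity and the hypothesis, it is also predicted as $y$. Hence any witness returns $y$.

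\emph{Step 2: $g(\hat{x}')=y$.} Fix $\hat{x}'$ and $\textsc{m}_p$ as in fact 2, and go through the branches of $g$.
\begin{itemize}
\item Case \textcircled{1}: consider the second-order mutant $\hat{x}'\odot\textsc{m}^*\odot\textsc{m}_p=\hat{x}\odot\textsc{m}^*\odot\textsc{m}_p$. It is predicted as $y$, so $g(\hat{x}')=f(\hat{x}')=y$.
\item Case \textcircled{2} with any witness $\textsc{m}_1$: the third-order mutant $\hat{x}'\odot\textsc{m}_1\odot\textsc{m}^*\odot\textsc{m}_p$ is predicted as $f(\hat{x}'\odot\textsc{m}_1)$ by the Case \textcircled{2} condition. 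By fact 3 it is also predicted as $y$.
\item Case \textcircled{2} fails: I show that Case \textcircled{3} is then satisfiable and that whichever pair it selects returns $y$.
\end{itemize}

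\emph{Step 3 (the main obstacle): Case \textcircled{3} is reached and returns $y$.} For existence I take the pair $(\textsc{m}^*,\textsc{m}_p)$; fact 3 shows its third- and fourth-order mutants all predict $y=f(\hat{x}'\odot\textsc{m}^*\odot\textsc{m}_p)$. The distinctness requirement $\textsc{m}_2\neq\textsc{m}_1$ needs care. If $\textsc{m}_p=\textsc{m}^*$, then $\hat{x}'\odot\textsc{m}^*=\hat{x}\odot\textsc{m}^*$, whose third-order mutants all predict $y$. Then Case \textcircled{2} would hold with witness $\textsc{m}^*$, contradicting that we are in this branch. For any other selected pair $(\textsc{m}_1,\textsc{m}_2)$, its fourth-order mutant $\hat{x}'\odot\textsc{m}_1\odot\textsc{m}_2\odot\textsc{m}^*\odot\textsc{m}_p$ forces $f(\hat{x}'\odot\textsc{m}_1\odot\textsc{m}_2)=y$ by the same argument. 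In particular Case \textcircled{4} is never reached. I read the quantifier domain $\mathbb{M}$ in Case \textcircled{3} as $\mathbb{M}_\mathbb{P}$. Combining Steps 1--3 gives $g(\hat{x}')=y=g(\hat{x})$.
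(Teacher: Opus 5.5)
Your proof is correct and takes essentially the paper's route: a branch-by-branch analysis of $g$ that pairs the witness $\textsc{m}^*$ with a patch-covering mask $\textsc{m}_p$ (using $\hat{x}'\odot\textsc{m}_p=\hat{x}\odot\textsc{m}_p$) and uses repeated masks for the lower-order special cases; the only organizational difference is that the paper first splits on whether $\textsc{m}^*$ itself covers $\textsc{p}$, while your fact 3 handles both situations at once. Your version is also slightly more careful: the paper writes $g(\hat{x})=f(\hat{x}\odot\textsc{m}_1)$ and $g(\hat{x}')=f(\hat{x}'\odot\textsc{m}_1)$ as if $g$ always selected the hypothesis's witness (or pair), whereas you show that every admissible witness in Case \textcircled{2} and every admissible pair in Case \textcircled{3} returns $y$.
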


\begin{proof}
We first analyze $g(\hat{x})$.
Case 1: If $\hat{x}$ meet the condition of Case \textcircled{1},
we can further get
$\forall {\textsc{m}_1}, {\textsc{m}_2}, {\textsc{m}_3}, {\textsc{m}_4}\in\mathbb{M}_\mathbb{P}, f(\hat{x}\odot\textsc{m}_1\odot\textsc{m}_2\odot{\textsc{m}_3}\odot{\textsc{m}_4})= f(\hat{x})$.
Note that the special case $\textsc{m}_1=\textsc{m}_2=\textsc{m}_3=\textsc{m}_4$ is included,
which means we can get $g(\hat{x})=f(\hat{x}{\odot} \textsc{m}_1)$.
Case 2: Otherwise, by the given condition $[\exists {\textsc{m}_1}\in\mathbb{M}_\mathbb{P},\forall$ ${\textsc{m}_2}$,${\textsc{m}_3}$,${\textsc{m}_4}\in\mathbb{M}_\mathbb{P}, f(\hat{x}\odot\textsc{m}_1\odot\textsc{m}_2\odot\textsc{m}_3\odot\textsc{m}_4)= f(\hat{x}\odot{\textsc{m}_1})]\implies
[\exists {\textsc{m}_1}\in\mathbb{M}_\mathbb{P},\forall$ ${\textsc{m}_2}$,${\textsc{m}_3}$$\in\mathbb{M}_\mathbb{P}, f(\hat{x}\odot\textsc{m}_1\odot\textsc{m}_2\odot\textsc{m}_3)= f(\hat{x}\odot{\textsc{m}_1})]
$, $\hat{x}$ meet the condition in Case \textcircled{2} and its prediction label $g(\hat{x})=f(\hat{x}{\odot} \textsc{m}_1)$, same as Case \textcircled{1}.
We then analyze $g(\hat{x}')$.
For $\hat{x}'\in\{\hat{x}' \mid \hat{x}'=(\textsc{J}-\textsc{p}){\odot} \hat{x}+\textsc{p}{\odot} \hat{x}' \land \textsc{p}\in \mathbb{P} \}$ (i.e., $\forall \hat{x}'\in\mathbb{A}_\mathbb{P}(\hat{x})$),
Case 1: Suppose $\textsc{m}_1{\odot} \textsc{p}=\textsc{O}$.
Then we can get
$\hat{x}'\odot{\textsc{m}_1}=((\textsc{J}-\textsc{p}){\odot} \hat{x}+\textsc{p}{\odot} \hat{x}')\odot{\textsc{m}_1}=\hat{x}\odot{\textsc{m}_1}$,
and further get
$[\forall {\textsc{m}_2}, {\textsc{m}_3}, {\textsc{m}_4}\in\mathbb{M}_\mathbb{P}, f(\hat{x}'\odot\textsc{m}_1\odot\textsc{m}_2\odot\textsc{m}_3\odot\textsc{m}_4)= f(\hat{x}\odot{\textsc{m}_1})]$, which is the same condition as that on $\hat{x}$.
Therefore, repeating those analysis for $g(\hat{x})$ above can get
$g(\hat{x})=g(\hat{x}')$.
Case 2: Otherwise, for ${\textsc{m}_2}, {\textsc{m}_3}, {\textsc{m}_4}$,
w.l.o.g., we let ${\textsc{m}_2}\odot\textsc{p}=\textsc{O}$ ($\textsc{m}_1\neq\textsc{m}_2$).
Then similarly, we get
$[\exists {\textsc{m}_1, \textsc{m}_2 (\neq \textsc{m}_1)}\in\mathbb{M}_\mathbb{P}, \forall {\textsc{m}_3}, {\textsc{m}_4}\in\mathbb{M}_\mathbb{P}, f(\hat{x}'\odot{\textsc{m}_1}\odot{\textsc{m}_2}\odot{\textsc{m}_3}\odot{\textsc{m}_4})= f(\hat{x}{\odot} \textsc{m}_1)]$.
Note that the special cases ${\textsc{m}_1}={\textsc{m}_3}, {\textsc{m}_2}={\textsc{m}_4}$,${\textsc{m}_3}={\textsc{m}_4}$  are included.
Case 2.1: Suppose the prediction label $g(\hat{x}')$ output in Case \textcircled{1}.
Then, since
$[\exists {\textsc{m}_1, \textsc{m}_2 (\neq \textsc{m}_1)}\in\mathbb{M}_\mathbb{P}, f(\hat{x}'\odot{\textsc{m}_1}\odot{\textsc{m}_2})= f(\hat{x}{\odot} \textsc{m}_1)]$ (special case $\textsc{m}_1=\textsc{m}_3$, $\textsc{m}_2=\textsc{m}_4$), by the condition of Case \textcircled{1}, we know
$g(\hat{x}')=f(\hat{x}')=f(\hat{x}{\odot} \textsc{m}_1)=g(\hat{x})$.
Case 2.2: Suppose the prediction label $g(\hat{x}')$ output in Case \textcircled{2}.
Then, since
$[\exists {\textsc{m}_1, \textsc{m}_2 (\neq \textsc{m}_1)}\in\mathbb{M}_\mathbb{P}, \forall \textsc{m}_3\in\mathbb{M}_\mathbb{P}, f(\hat{x}'\odot{\textsc{m}_1}\odot{\textsc{m}_2}\odot{\textsc{m}_3})= f(\hat{x}{\odot} \textsc{m}_1)]$ (special case $\textsc{m}_3=\textsc{m}_4$), by the condition of Case \textcircled{2}, we know
$g(\hat{x}')=f(\hat{x}'{\odot} \textsc{m}_1)=f(\hat{x}{\odot} \textsc{m}_1)=g(\hat{x})$.
Case 2.3: Otherwise,
the prediction label of $\hat{x}'$ should output in Case \textcircled{3}
since
$[\exists {\textsc{m}_1, \textsc{m}_2 (\neq \textsc{m}_1)}\in\mathbb{M}_\mathbb{P}, \forall {\textsc{m}_3}, {\textsc{m}_4}\in\mathbb{M}_\mathbb{P}, f(\hat{x}'\odot{\textsc{m}_1}\odot{\textsc{m}_2}\odot{\textsc{m}_3}\odot{\textsc{m}_4})=f(\hat{x}{\odot} \textsc{m}_1)= f(\hat{x}'{\odot} \textsc{m}_1\odot\textsc{m}_2)]$ (special case $\textsc{m}_1=\textsc{m}_3$, $\textsc{m}_2=\textsc{m}_4$),
and further $g(\hat{x}')=f(\hat{x}'{\odot} \textsc{m}_1\odot\textsc{m}_2)=f(\hat{x}{\odot} \textsc{m}_1)=g(\hat{x})$.
\end{proof}

\begin{thm*}[Certification of samples]
Given an arbitrary sample $\hat{x}$,
if $c(\hat{x})=True$ holds,
$\forall \hat{x}'\in\mathbb{A}_\mathbb{P}(\hat{x}),
g(\hat{x}')=g(\hat{x})
$.
\end{thm*}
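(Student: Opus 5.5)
The theorem follows from a disjunction split on the definition of $c_r(\hat{x})$, reducing each branch to one of the two lemmas already proved above. By Fig.~\ref{fig:MRCert_formulation}, $c_r(\hat{x})=\textit{True}$ means one of two things. Either the condition of case \textcircled{5} holds, which is $\forall \textsc{m}_1,\textsc{m}_2,\textsc{m}_3\in\mathbb{M}_\mathbb{P}$, $f(\hat{x}\odot\textsc{m}_1\odot\textsc{m}_2\odot\textsc{m}_3)=f(\hat{x})$. Or the condition of case \textcircled{6} holds, which is $\exists \textsc{m}_1\in\mathbb{M}_\mathbb{P}$, $\forall \textsc{m}_2,\textsc{m}_3,\textsc{m}_4\in\mathbb{M}_\mathbb{P}$, $f(\hat{x}\odot\textsc{m}_1\odot\textsc{m}_2\odot\textsc{m}_3\odot\textsc{m}_4)=f(\hat{x}\odot\textsc{m}_1)$. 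This matches the disjunction stated in Thm.~\ref{thm:recovery}. I would therefore take an arbitrary $\hat{x}'\in\mathbb{A}_\mathbb{P}(\hat{x})$ and case-split on which disjunct is true.

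In the first branch, the hypothesis is exactly the antecedent of Lemma~\ref{lem:recovery_benign}, so $g(\hat{x}')=g(\hat{x})$ follows immediately. In the second branch, the hypothesis is exactly the antecedent of Lemma~\ref{lem:recovery_one_patched_sample}, which again yields $g(\hat{x}')=g(\hat{x})$. If both disjuncts hold, either lemma suffices. Since $\hat{x}'$ was arbitrary, the universal statement follows.

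The main subtlety is that $c_r$ is defined with a guard: case \textcircled{5} applies ``if case \textcircled{1} is taken in $g$'' and case \textcircled{6} applies ``if case \textcircled{2} is taken''. The argument must not depend on these guards. I would note that both lemmas were proved for an arbitrary $\hat{x}$ without assuming which branch of $g$ fires. Each lemma proof itself establishes which branch of $g$ is taken: Case \textcircled{1} for the first lemma, and Case \textcircled{1} or \textcircled{2} for the second. So reading $c_r$ as the plain disjunction, as the theorem statement does, loses nothing, and the guarded reading is merely a restriction of it.
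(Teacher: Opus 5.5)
Your proposal is correct and matches the paper's proof. The paper notes that the theorem's antecedent is the disjunction of the antecedents of Lemma~\ref{lem:recovery_benign} and Lemma~\ref{lem:recovery_one_patched_sample}, so each branch is discharged by the corresponding lemma; your remark that the guarded definition of $c_r$ in Fig.~\ref{fig:MRCert_formulation} implies this plain disjunction is a valid clarification the paper leaves implicit.
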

Simply conjoining the antecedent of Lemma~\ref{lem:recovery_benign} and Lemma~\ref{lem:recovery_one_patched_sample} can prove this theorem.

\begin{thm*}[Round-trip certification of samples]
Given a benign sample ${x}$, if
$[\forall {\textsc{m}_1},{\textsc{m}_2},{\textsc{m}_3},{\textsc{m}_4}\in\mathbb{M}_\mathbb{P}, f({x}\odot$
$\textsc{m}_1$$\odot$$\textsc{m}_2$$\odot\textsc{m}_3$$\odot\textsc{m}_4)=f(x)]$
(i.e., $c_r^2(x)=\textit{True}$),
$[\forall {x}'\in\mathbb{A}_\mathbb{P}({x}),
g({x}')=g({x})
\land
c_r({x}')=\textit{True}
\land
c_r(x)=\textit{True}]
$.
\end{thm*}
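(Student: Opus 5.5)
The plan is to derive all three conjuncts from the round-trip condition $c_r^2(x)$, using Lemma~\ref{lem:recovery_benign}, Lemma~\ref{lem:recovery_one_patched_sample} and Thm.~\ref{thm:recovery}. Two facts do the work. Fourth-order consistency implies consistency at every lower order, because repeated masks are idempotent ($\textsc{m}\odot\textsc{m}=\textsc{m}$). And for any patch $\textsc{p}$ there is a covering mask that makes the patched sample agree with $x$ after masking.

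\textbf{Step 1: $c_r(x)$ and the value of $g(x)$.} Specialise $c_r^2(x)$ by setting $\textsc{m}_4=\textsc{m}_3$. This gives $\forall \textsc{m}_1,\textsc{m}_2,\textsc{m}_3\in\mathbb{M}_\mathbb{P},\ f(x\odot\textsc{m}_1\odot\textsc{m}_2\odot\textsc{m}_3)=f(x)$, which is the antecedent of Lemma~\ref{lem:recovery_benign}, i.e.\ case \textcircled{5}. Specialising further to second order shows that case \textcircled{1} of $g$ is taken, so $g(x)=f(x)$ and $c_r(x)=\textit{True}$ via case \textcircled{5}. Lemma~\ref{lem:recovery_benign} (equivalently Thm.~\ref{thm:recovery}) then gives $g(x')=g(x)$ for every $x'\in\mathbb{A}_\mathbb{P}(x)$.

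\textbf{Step 2: $c_r(x')$ for an arbitrary patched neighbour.} Fix $x'\in\mathbb{A}_\mathbb{P}(x)$ with patch region $\textsc{p}$, and choose $\textsc{m}^\ast\in\mathbb{M}_\mathbb{P}$ with $\textsc{m}^\ast\odot\textsc{p}=\mathbf{O}$. Then $x'\odot\textsc{m}^\ast=x\odot\textsc{m}^\ast$. Hence for all $\textsc{m}_2,\textsc{m}_3,\textsc{m}_4$,
\[
f(x'\odot\textsc{m}^\ast\odot\textsc{m}_2\odot\textsc{m}_3\odot\textsc{m}_4)=f(x\odot\textsc{m}^\ast\odot\textsc{m}_2\odot\textsc{m}_3\odot\textsc{m}_4)=f(x),
\]
by $c_r^2(x)$. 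Also $f(x'\odot\textsc{m}^\ast)=f(x\odot\textsc{m}^\ast)=f(x)$, by $c_r^2(x)$ with all four masks equal to $\textsc{m}^\ast$. So $x'$ satisfies the second disjunct of Thm.~\ref{thm:recovery} with witness $\textsc{m}^\ast$.

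\textbf{Main obstacle.} $c_r$ is defined case-wise according to which branch of $g$ is taken, not as the plain disjunction. So I must check that whenever $x'$ falls into case \textcircled{1} it also satisfies case \textcircled{5}, and whenever it falls into case \textcircled{2} it satisfies case \textcircled{6}. The case \textcircled{2} branch is immediate: case \textcircled{6} is an existential condition, and $\textsc{m}^\ast$ witnesses it regardless of which $\textsc{m}_1$ $g$ happens to pick.

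For the case \textcircled{1} branch, case \textcircled{1} gives $f(x'\odot\textsc{m}^\ast\odot\textsc{m}^\ast)=f(x')$, so $f(x')=f(x)$. I then need $f(x'\odot\textsc{m}_1\odot\textsc{m}_2\odot\textsc{m}_3)=f(x')$ for all triples, including those that avoid $\textsc{m}^\ast$. I would handle this by appealing to the paper's reading of $c_r(\hat{x})$ as the disjunction in Thm.~\ref{thm:recovery} (``i.e., $c_r(\hat{x})=\textit{True}$''), under which Step 2 already suffices. Alternatively, I would interpret case \textcircled{6} as also applicable when case \textcircled{1} holds, as in the first case of the proof of Lemma~\ref{lem:recovery_one_patched_sample}. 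Finally, I would note that case \textcircled{3} and case \textcircled{4} cannot be reached: by Step 2, $x'$ satisfies the condition of case \textcircled{2} whenever it fails case \textcircled{1}.
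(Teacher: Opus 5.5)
Your proposal is correct and follows the paper's proof: $c_r^2(x)$ specialises to cases \textcircled{1} and \textcircled{5} for $x$, and the covering mask $\textsc{m}^\ast$ with $x'\odot\textsc{m}^\ast=x\odot\textsc{m}^\ast$ gives the antecedent of Lemma~\ref{lem:recovery_one_patched_sample} for every $x'$, while Lemma~\ref{lem:recovery_benign} gives $g(x')=g(x)$. The paper also reads $c_r$ as the disjunction in Thm.~\ref{thm:recovery} when it concludes $c_r(x')=\textit{True}$ from Lemma~\ref{lem:recovery_one_patched_sample}, so your resolution of the case-wise subtlety matches its argument, and your explicit check that $f(x'\odot\textsc{m}^\ast)=f(x)$ makes it slightly more careful.
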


\begin{proof}
By the condition $[\forall {\textsc{m}_1},{\textsc{m}_2},{\textsc{m}_3},{\textsc{m}_4}\in\mathbb{M}_\mathbb{P}, f({x}\odot$
$\textsc{m}_1$$\odot$
$\textsc{m}_2$
$\odot\textsc{m}_3$
$\odot\textsc{m}_4)=f(x)]$, we know the returned label $g({x})=f(x)$ in Case \textcircled{1} and $c_r(x)=\textit{True}$.
Still by this condition, we know
[$\forall x'\in\mathbb{A}_\mathbb{P}(x), \exists \textsc{m}_1\in\mathbb{M}_\mathbb{P},
\forall {\textsc{m}_2},
{\textsc{m}_3},
{\textsc{m}_4}\in\mathbb{M}_\mathbb{P},
f({x}'\odot$
$\textsc{m}_1$$
\odot$$\textsc{m}_2$
$\odot\textsc{m}_3$
$\odot\textsc{m}_4)=f(x)$], since
[$\exists \textsc{m}_1\in\mathbb{M}_\mathbb{P}, $$\forall $
$x'\in\mathbb{A}_\mathbb{P}(x), x'\odot\textsc{m}_1=x{\odot} \textsc{m}_1$] (see Fig.~\ref{fig:mask-overview_concept} for illustration).
Then by Lemma~\ref{lem:recovery_one_patched_sample}, we know [$\forall x'\in\mathbb{A}_\mathbb{P}(x), c_r(x')=\textit{True}$].
By Lemma~\ref{lem:recovery_benign}, we also know [$\forall x'\in\mathbb{A}_\mathbb{P}(x),$
$g(x')=g(x)$].
Finally, we know
$[\forall {x}'\in\mathbb{A}_\mathbb{P}({x}),
g({x}')=g({x})
\land
c_r({x}')=\textit{True}\land
c_r({x})=\textit{True}]
$.
\end{proof}

\section{The Subtlety in the Design of MRCert}
\emph{The ordering of the type-specific cases in $g$.}
The sequence of the cases in $g$ is crucial. Changing the order will either break the certification guarantee for adversarially patched samples or degrade the clean accuracy.
For example, swapping Case \textcircled{2} and Case \textcircled{3} would require the benign counterpart of an adversarially patched sample to exhibit the same prediction labels under four masks (instead of three) to remain certified; otherwise, a patch attacker could exploit a fourth‑order masked mutant that predicts another label to evade the detection of the adversarially patched sample by this case-swapped version of $g$.
In our proposed version of $g$,
Case 1 is intentionally placed first to prevent benign samples from reaching the recovery stages, thereby improving clean accuracy.

\emph{The ordering of mask conditions within a type-specific case in $g$.}
The internal ordering of masking conditions also matters. Consider Case 2 (indirect testing): Suppose that its quantifiers were reordered from
``there exists mask 1 such that for all mask 2, mask3...''
to
``there exist mask 1 and mask 2 such that for all mask 3 ...''.
Then, adversarial samples containing two patches could escape certification, even when their benign counterparts are round-trip certified.
In this reordered structure, the adversarial patched sample could simply choose two masks (mask 1 and mask 2) that expose both patches, i.e., they do not cover either patch. Then, even if mask 3 covers one patch, the other patch remains uncovered, allowing the attacker to arbitrarily alter the prediction and evade detection.
This shows that the original quantifier ordering is essential for preventing such escape cases.

\emph{The number of masks within a type-specific cases in $g$.}
Taking Case 2 as an example, the number of masks is also critical.
Reducing the number of masks in Case 2 would allow adversarial samples with two patches to escape from the detection of indirect testing,
for reasons analogous to the ordering issue discussed above.
Conversely, increasing the number of masks (e.g., by 1) would increase the masking requirements for certification, which makes certifying a benign sample require prediction consistency under four masks, instead of the current three, which directly decreases the total number of samples that can be certified.

\section{Extension of MRCert to Recover and Certify Samples with N-patches}
We can extend the maximum number of patches from 2 to $N$ (called MRCert-N-patch, a variant of MRCert) following the following idea.
First, we apply each set of $N$ masks in the covering mask set $\mathbb{M}$ on the input sample $\hat{x}$ to test whether $\hat{x}$ is harmful.
If it is not harmful, MRCert-N-patch returns the label $f(\hat{x})$ (marked as N-Case \textcircled{1}).
If $\hat{x}$ is detected as harmful, we then test whether all its first-order mutants are harmful by applying each possible subset with $N$ masks selected with replacement from $\mathbb{M}$ on each first-order mutant of $\hat{x}$.
If there exists a first-order mutant, whose all $(N+1)$th-order mutants generated from the first-order mutant of $\hat{x}$ are predicted with the same label as this first-order sample, then  $\hat{x}$  is deemed as a one-patch harmful sample, this first-order mutant is ``clean'' and the prediction label of this first-order mutant is returned (marked as N-Case \textcircled{2}).
If that is not the case, we then test whether all second-order mutants of  $\hat{x}$ are harmful in the same manner, and repeat until the $N$th-order mutants of  $\hat{x}$ are tested.
For the certification function $c_r$ with the input sample $\hat{x}$, it should be extended to the condition that all $(N+1)$th-order mutants of $\hat{x}$ are predicted with the same label as $\hat{x}$ (for those input samples whose label returned in Case \textcircled{1}), and the condition that there exists a first-order mutant of $\hat{x}$, whose all $(N+2)$th-order mutants are predicted with the same label as this first-order mutant of $\hat{x}$ (for those input samples whose label returned in Case \textcircled{2}), and certifying the input samples output in other cases by the condition in the same manner.
For the round-trip certification function, it should be the condition that all $2N$th-order mutants of a benign sample $x$ are predicted with the same label as $x$.
We leave the formal proof and implementation as future work.

\section{Experimental Setup}
\subsection{Environment and Dataset}
The evaluation is conducted on an Ubuntu 20.04 machine equipped with four Nvidia 3090 GPUs.
Following \cite{xiang2022patchcleanser}, we adopt 1000-class ImageNet \cite{deng2009imagenet}, 10-class CIFAR10 \cite{krizhevsky2009learning}, and 10-class ImageNette as our datasets, encompassing both large-scale, diverse datasets and efficient benchmarks widely used for rapid experimentation.

\subsection{Models and Baselines}
We adopt the Vision Transformer (ViT) as our base model, which achieves the state-of-the-art in many patch robustness certification defenders \cite{xiang2024patchcure,li2022vip,salman2022certified,xiang2022patchcleanser,patchcensor}.

We adopt the state-of-the-art masking-based defender
PatchCURE (\textbf{PC}) \cite{xiang2024patchcure} and state-of-the-art smoothing-based defender
VOT-CrossCert (\textbf{VOT}) from CrossCert \cite{zhou2024crosscert}, which shares the common methodology for smoothing-based recovery with ViP \cite{li2022vip} and S-ViT \cite{salman2022certified}.
We arm VOT with the methodology for runtime verification proposed by \cite{yatsura2023certified}
as our baseline to compare with \textbf{MRCert} in our infrastructure.

Specifically, MRCert adopts the end-to-end ViT-SRF model (setting 14x1-k6\footnote{Other variants ViT-SRF14x2, ViT-SRF2x2, and BagNet33 cannot handle the situation against two patches, since all their receptive fields would be inherently masked by VIT-SRF \cite{xiang2024patchcure} when generating corresponding mutants. $k$ is the parameter to control the position of splitting SRF and LRF in VIT-SRF in the range [0, 12], which can tune the trade-off between computational efficiency and robustness.
We adopt the middle one since it is not the focus of this paper.}) proposed and used by PatchCURE \cite{xiang2024patchcure} from \url{https://github.com/inspire-group/PatchCURE} with the pre-train weights and setting from MAE \cite{he2022masked}, so that
MRCert can use the same strategy in the experiment as PatchCURE to calculate the covering mask set, generate mutants with their predictions inside the ViT-SRF.
We use the training scripts with the settings and hyperparameters from the official repository of PatchCURE for fine-tuning ViT-SRF, which uses the training samples with their first and second-order mutants only.
We then apply the same finetuned ViT-SRF to both MRCert and PatchCURE to ensure fairness.
We also adopt this official repository \url{https://github.com/inspire-group/PatchCURE} to implement PatchCURE in our infrastructure.
We adopt the same pre-trained weights and settings from MAE \cite{he2022masked} for VOT.
We follow the fine-tuning settings and hyperparameters from the state-of-the-art S-ViT \cite{salman2022certified} for VOT (note that smoothing-based recovery adopts a different notion of mutants, where a mutant is a slice of a sample; therefore, a different fine-tuning on the base model is needed \cite{li2022vip}).
We also adopt the column ablation with the ablation size of 19 pixels from \cite{salman2022certified}.
We have extracted the corresponding results from the original papers of ViP \cite{li2022vip} and S-ViT \cite{salman2022certified} to compare with VOT in Tab.~\ref{tab:smooth_results}
(their results on round-trip certification are not reported in their papers).

\subsection{Metrics}
\label{sec:metrics}
Let ${x}$ be a benign sample with the true label $y_0$ in a clean test dataset $\mathbb{D}$, and $R = \langle g(x), c(x),  c^2_r(x)\rangle$ be a certified recovery defender.
 \textbf{Clean accuracy} is the fraction of $\mathbb{D}$ that are correctly predicted,
defined as
$acc_{clean}=\frac{\mid\{{x}\in\mathbb{D}\mid g({x})=y_0\}\mid}{\mid\mathbb{D}\mid}$, which evaluates the standard performance of certified recovery as a classifier.
\textbf{Certified accuracy} is the fraction of $\mathbb{D}$ that are correctly predicted and certified robust, whose adversarially patched samples should be predicted with the same label as the benign sample,
defined as
$acc_{cert}=\frac{\mid\{{x}\in\mathbb{D}\mid g({x})=y_0\land c_r(x)=\textit{True}\}\mid}{\mid\mathbb{D}\mid}$.
\textbf{Round-trip certified accuracy} is the fraction of $\mathbb{D}$ that are correctly predicted and round-trip certified, whose adversarially patched samples should be predicted with the same label as the benign sample and gain a provable robust verdict,
defined as
$acc_{cert^2}=\frac{\mid\{{x}\in\mathbb{D}\mid g({x})=y_0\land c_r^2(x)=\textit{True}\}\mid}{\mid\mathbb{D}\mid}$.
The clean accuracy of
the base classification model $f$ is
$\frac{\mid\{{x}\in\mathbb{D}\mid f({x})=y_0\}\mid}{\mid\mathbb{D}\mid}$.
Clean accuracy and certified accuracy are commonly adopted by peers \cite{levine2020randomized,salman2022certified,li2022vip,xiang2024patchcure,xiang2022patchcleanser,xiang2021patchguard}, and round-trip certified accuracy is proposed by \cite{yatsura2023certified}.

In RQ1, the test dataset $\mathbb{D}$ is a set of patched samples from attacking a subset of the test dataset (validation dataset for ImageNet), denoted as $\mathbb{D}_{pat}$.
Clean accuracy and certified accuracy are measured in this set of patched samples to evaluate certified recovery defenders on their ability to recover predictions and recover predictions with a provably robust verdict, respectively.
We  measure each defender's ability to certify adversarially patched samples by  \textbf{adversarial certified accuracy}
$acc_{cert}^{adv}=\nicefrac{\mid\{\hat{x}\in\mathbb{D}_{pat}\mid g(\hat{x})= y_0\land c_r(\hat{x})=\textit{True} \land f(\hat{x})\neq y_0\}\mid}{\mid\{\hat{x}\in\mathbb{D}_{pat}\mid f(\hat{x})\neq y_0\}\mid}$.
In RQ2, we use the original test dataset (validation dataset for ImageNet) as $\mathbb{D}$.

\subsection{Experimental Procedure}

We adopt the patch sizes of 16 pixels (measured as a square patch region with a side length of 16 pixels) and 32 pixels for all three datasets. All samples are rescaled to 224x224 \cite{zhou2024crosscert,li2022vip,xiang2022patchcleanser,xiang2024patchcure}.

In RQ1, we perform an actual adversarial patch attack IFGSM adopted from \cite{levine2020randomized} on PC and MRCert, which uses their shared base model as the attack model for gradient-based attacks without the knowledge of the non-differentiable label recovery function for fairness.
Note that the recent proposed attacks are more toward practical, such as limiting the access times or can only get the returned prediction label, while our IFGSM attack has direct and full access to the base model, which is a powerful exam against attackers.
We set 80 random starts, 150 iterations per random start, and a step size of 0.05 following \cite{levine2020randomized}.
We randomly select 500 test samples from each test dataset for attacks.
We follow the practice in \cite{levine2020randomized} to return the worst patched sample
for each benign sample and place it into $\mathbb{D}_{pat}$.
In RQ2, we directly adopt the entire test dataset (validation dataset for ImageNet) as $\mathbb{D}$.

In both RQ1 (as a follow-up to the attack) and RQ2, for each defender on each sample in each test dataset,
we first generate and evaluate corresponding mutants using the corresponding base model (PC and MRCert share the same ones), then apply each defender to the prediction results collected for these mutants, and measure the metric values.

\end{document}